\definecolor{darkblue}{rgb}{0.,0.,0.4}
\definecolor{darkred}{rgb}{0.5,0.,0.}
\newcommand{\refeq}[1]{Eq.~(\ref{#1})}
\newcommand{\reffig}[1]{Fig.~\ref{#1}}
\newcommand{\refsec}[1]{Sec.~\ref{#1}}
\renewcommand{\d}{\mathrm{d}}
\newcommand{\E}{\mathbb{E}}
\newcommand{\tr}{\text{tr}}
\newcommand{\stabgen}{\mathcal{S}}
\newcommand{\identity}{\mathbb{I}}
\newcommand{\pauli}{\mathcal{P}}
\newcommand{\hilbert}{\mathcal{H}}
\newcommand{\channel}{\mathcal{E}}
\newcommand{\opspace}{L}
\newcommand{\supp}{\mathsf{supp}}
\newcommand{\eqfig}[2]{\vcenter{\hbox{\includegraphics[height=#1]{#2}}}}
\newcommand{\add}[1]{#1}
\newcommand{\change}[1]{}
\newtheorem{theorem}{Theorem}
\newtheorem{lemma}{Lemma}
\begin{document}
\title{Ultrafast Entanglement Dynamics in Monitored Quantum Circuits}
\author{Shengqi Sang}\affiliation{\PI}\affiliation{\UW}
\author{Zhi Li}\affiliation{\PI}
\author{Timothy H. Hsieh}\affiliation{\PI}
\author{Beni Yoshida}\affiliation{\PI}
\newcommand*{\PI}{Perimeter Institute for Theoretical Physics, Waterloo, Ontario N2L 2Y5, Canada}
\newcommand*{\UW}{Department of Physics and Astronomy, University of Waterloo, Waterloo, Ontario N2L 3G1, Canada}

\begin{abstract}
    Projective measurement, a basic operation in quantum mechanics, can induce seemingly nonlocal effects. In this work, we analyze such effects in many-body systems by studying the non-equilibrium dynamics of weakly monitored quantum circuits, focusing on entanglement generation and information spreading. We find that, due to measurements, the entanglement dynamics in monitored circuits is indeed ``faster" than that of unitary ones in several ways. 
    Specifically, we find that a pair of well-separated regions can become entangled in a time scale $\ell^{2/3}$, sub-linear in their distance $\ell$. For the case of Clifford monitored circuits, this originates from super-ballistically growing stabilizer generators of the evolving state.
    In addition, we find initially local information can spread super-ballistically as $t^{3/2}$. 
    Furthermore, by viewing the dynamics as a dynamical encoding process, we show that the super-ballistic growing length scale relates to an encoding time that is sublinear in system size. 
    \add{To quantify the information dynamics, we develop a formalism generalizing operator spreading to non-unitary dynamics, which is of independent interest.}
\end{abstract}
\maketitle

\section{Introduction}
Causality, which places an ultimate speed limit on the propagation of information, is a cornerstone of physics. 
Even for non-relativistic many-body quantum systems, there is an emergent speed limit, as long as the microscopic interactions are short-ranged and bounded~\cite{lieb1972finite}.
Such emergent causality is manifest in various different aspects of many-body quantum dynamics, including correlation functions, entanglement growth, and operator spreading 
\cite{shenker2014black, roberts2015localized, hosur2016chaos, calabrese2005evolution, nahum2017entanglement, von2018operator, nahum2018operator}. 

In the non-relativistic limit, however, projective measurements can have seemingly causality-violating effects on quantum correlations. 
This property underpins many fundamental quantum information protocols, such as quantum teleportation and entanglement swapping, which transmit quantum information among well-separated parties ``instantly" once the measurement outcomes are transmitted. 
Similar phenomena also appear in many-body physics \cite{friedman2022locality, bao2021finite} and can be exploited for practical advantages such as preparing long-range entangled states efficiently \cite{PhysRevLett.127.220503,https://doi.org/10.48550/arxiv.2209.06202,https://doi.org/10.48550/arxiv.2205.01933,lu2022measurement, lee2022decoding, tantivasadakarn2022hierarchy, zhu2022nishimori}, which would be prohibited by the emergent causality if only unitary operations are allowed.

Given these facts, we are naturally led to ask how the causal dynamics in unitary many-body systems are modified by projective measurements. An ideal setting to address such questions is monitored quantum circuits, a subject of many recent studies. By arranging local unitaries and measurements in a random fashion, random monitored circuits provide a useful minimal tractable model for generic many-body systems under hybrid (unitaries and measurements) dynamics. This model has been studied extensively and found to display a rich set of phenomena including a dynamical phase transition driven by the frequency of measurements \cite{aharonov2000quantum,li2018zeno,chan2019unitary,skinner2019measurement, gullans2020dynamical, li2019measurement, li2021conformal, zabalo2020critical,zabalo2022operator,jian2020measurement,bao2020theory,gullans2020scalable}. 

In this work, we revisit the dynamics of monitored circuits with low measurement frequency, namely weakly monitored quantum circuits, from the perspective of emergent nonlocality due to projective measurements. We find that the existence of measurements drastically modifies the dynamics of the entanglement. Specifically, we consider the following two complementary aspects of entanglement dynamics.
\begin{enumerate}
    \item[a)] How does the entanglement structure evolve with time starting from an unentangled (maximally mixed or pure product) initial state?
    \item[b)] How does the monitored dynamics spread and forget the input information?
\end{enumerate}
In both settings, we find that the entanglement and information dynamics exhibit ultrafast behavior due to projective measurements. Specifically, in the first setting, the time for two well-separated regions to build non-trivial multipartite entanglement has a sublinear scaling $t\sim(\text{distance})^{2/3}$. 
In the second setting, initially localized information spreads out to the super-linear length scale as $ \ell \sim(\text{time})^{3/2}$. In contrast, in unitary dynamics, any length scale can grow at most linearly in time due to emergent causality. 

We also discuss our results from the perspective of viewing monitored circuits as dynamical encoding processes \cite{gullans2020dynamical, li2021statistical, yoshida2021decoding, fidkowski2021dynamical}.
We find that the encoding is completed in a time scale $t\sim L^{2/3}$ with $L$ being the total system size. Furthermore, we find that the code distance grows as $d_{\text{code}}\sim t^{1/2}$ until saturation; and relatedly, the input state's local information contained in size $\ell$ is lost at a time scale $t\sim \ell^2$. 

Finally, we discuss the relation between our work and directed polymers in a random environment (DRPE)~\cite{huse1985pinning, PhysRevLett.58.2087}, which was recently proposed to be an effective description of entanglement dynamics in weakly-monitored circuits \cite{li2021dpre}. 

\subsection{Setup}

We consider one-dimensional (1D) random monitored quantum dynamics generated by two-qubit random unitary Clifford gates and randomly applied single-qubit measurements in the Pauli $Z$ basis, as depicted in \reffig{fig:circuit}. The restrictions to Clifford gates and Pauli measurements, as well as suitable initial states, ensure efficient classical simulation using the stabilizer formalism \cite{gottesman1997stabilizer}.

When a state $\rho$ is measured in the $Z$ basis, the (unnormalized) resulting state will be $\Pi_{\pm}\rho\Pi_{\pm}$ with a probability $p_{\pm}=\tr(\Pi_{\pm}\rho)$, where $\Pi_{\pm}$ is the projector to the $|0\rangle/|1\rangle$ subspace. 
For a fixed circuit structure and input state $\rho_0$, the evolving trajectory state $\rho_{\textbf{m}}(t)$ depends on all the earlier measurement outcomes $\mathbf{m}$:
\begin{equation}\label{eq:evolution}
    \rho_{\mathbf{m}}(t)\propto C_{\mathbf{m}}(t) \rho_0 C^\dagger_{\mathbf{m}}(t),
\end{equation}
where $C_{\textbf{m}}(t)$ is a product of projectors and unitary operators. 
The probability of getting this output is $p_{\textbf{m}}=\tr(C_{\mathbf{m}} \rho_0 C_{\mathbf{m}}^\dagger)$. We omit the subscript $\textbf{m}$ whenever the meaning is clear from the context.
Randomness in this monitored circuit is three-fold: choices of two-qubit unitary Clifford gates, locations of measurements, and measurement outcomes. 
We will be interested in averaged behavior over all three sources of randomness.

\begin{figure}
\centering
\includegraphics[width=0.25\textwidth]{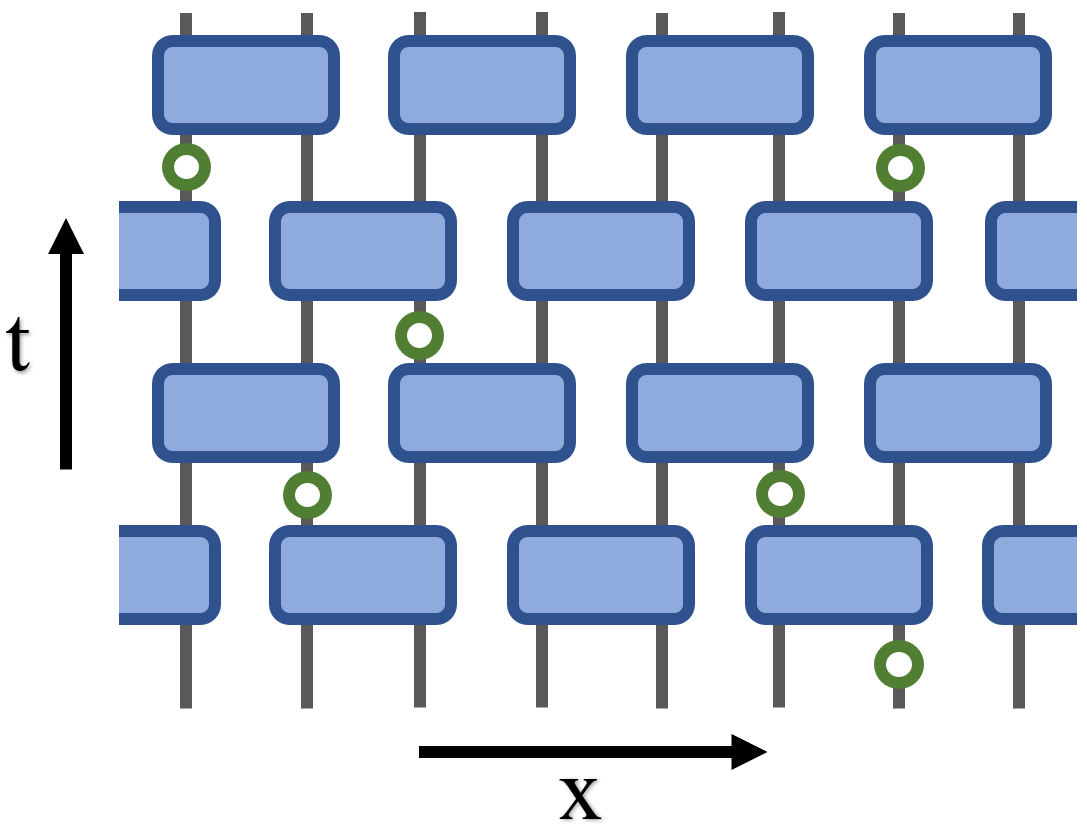}

\caption{An illustration of monitored quantum circuits. Each blue block represents a random unitary gate drawn from the 2-qubit Clifford group. On each vertical bond, a projective measurement (represented as a green dot) in the Pauli-$Z$ basis is applied with a probability $p$. 
}
\label{fig:circuit}
\end{figure}

In this work, we focus on the properties of the weakly-monitored phase. In all the simulations presented later, we take $p=0.08\approx 0.5 p_c$ where $p_c\approx 0.16$ is the critical point for the measurement-induced phase transition~\cite{li2019measurement}. Nevertheless, we expect our results concerning the entanglement dynamics are universal for monitored circuits in the weakly monitored phase with disorder, see \refsec{sec:discussion} for more discussions.

\subsection{Summary of Results}


\textbf{Entanglement dynamics:}
In \refsec{sec:entgen}, we study how the entanglement structure of a trajectory state $\rho_{\textbf{m}}(t)$ evolves with time until its saturation.  

\emph{Stabilizer growth:} 
We start by considering the dynamics with a maximally mixed initial state. 
The length distribution of the stabilizer generators provides an informative proxy for the underlying entanglement structure.
In \refsec{sec:clip}, we find that the evolution of stabilizer length distribution takes the following form:
\begin{equation}\label{eq:distr-sum}
    h(\ell,t) \simeq \frac{e^{-\ell/\ell^*(t)}}{\ell^{5/3}},
\end{equation}
with a time-dependent soft cutoff $\ell^*(t)$ which exhibits a super-linear growth: 
\begin{equation}\label{eq:lstar1}
    \ell^*(t)\propto t^{\frac{3}{2}},
\end{equation} 
indicating that there is a small fraction of stabilizers that grows super-linearly with time, as conjectured in~\cite{yoshida2021decoding}. 

\emph{Entanglement growth:} 
The fast-growing stabilizers leave important imprints on $\rho(t)$'s entanglement evolution. In \refsec{sec:CMI}, we show that they can be detected by a tripartite entanglement measure, namely the conditional mutual information:
\begin{equation}
    I_{A:B|C}\equiv S_{AC}+S_{BC}-S_{ABC}-S_{C},
\end{equation}
where $A$ and $B$ are two regions separated by another large region $C$. We show that $I_{A:B|C}$ exactly measures the number of nonlocal stabilizer generators connecting $A$ and $B$. \refeq{eq:distr-sum} then implies that $I_{A:B|C}$ starts to take non-zero values at a time scale
\begin{equation}\label{eq:intro32}
    t^* \simeq \mathrm{dist}(A,B)^{\frac{2}{3}};
\end{equation}
which suggests that, at time $t$, non-trivial entanglement at the distance scale of $O(t^{3/2})$ will be generated.

\emph{Code growth:} 
The super-linear growth of entanglement naturally gives rise to a new sublinear time scale of $O(L^{2/3})$ when the cutoff $\ell^*(t)$ reaches the system size $L$. In \refsec{sec:coded}, we argue that the $O(L^{2/3})$ time scale coincides with the encoding time of a monitored circuit when it is regarded as a dynamically generated quantum error-correcting code whose code subspace is supported on $\rho(t)$. 
Furthermore, we show that the $\rho(t)$'s (contiguous) code distance grows as
\begin{equation}\label{eq:dcode1}
    d_{\text{code}}\propto t^{\frac{1}{2}},
\end{equation}
until it reaches its steady value $\simeq O(L^{1/3})$~\cite{li2021statistical}
, which occurs at $t \simeq O(L^{2/3})$.

\emph{Pure initial state:} 
In \refsec{sec:pure_vs_mix}, we extend the discussion to the dynamics starting from a pure initial state. We find that the entanglement dynamics is qualitatively similar to that of unitary-only dynamics and does not display any super-linearly growing length scale. \\

\begin{figure}
\centering
\includegraphics[width=0.35\textwidth]{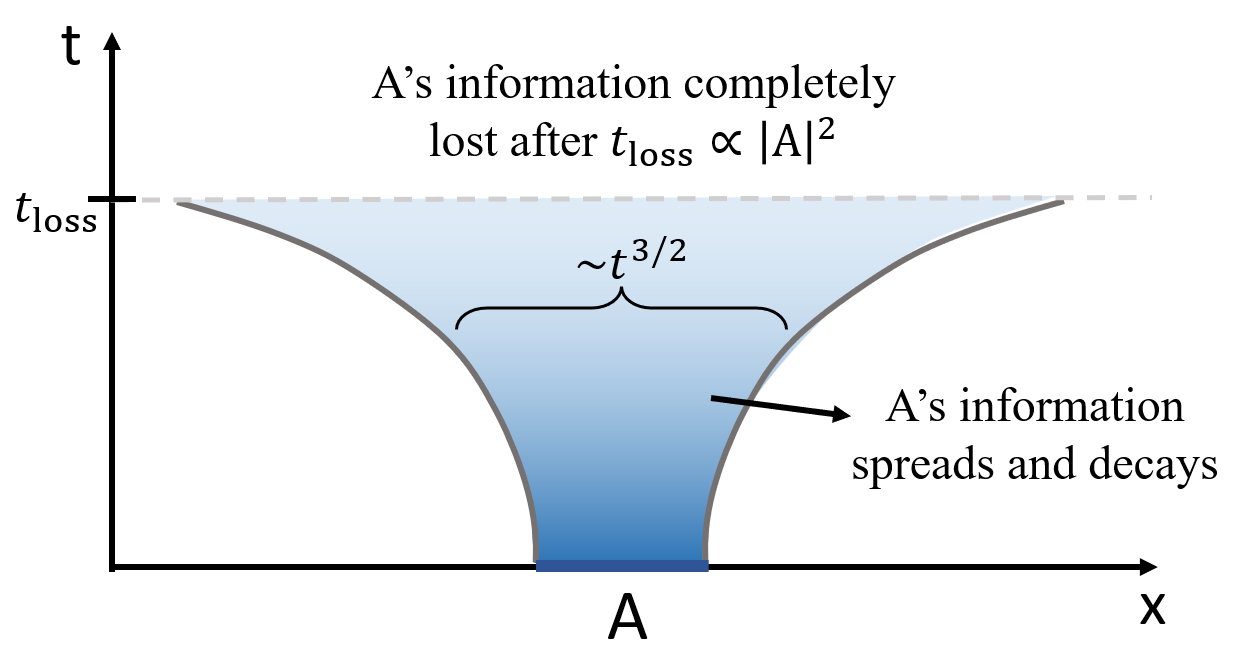}

\caption{An illustration of the spreading and decay of local information.
}
\label{fig:illu_infospread}
\end{figure}

\textbf{Information dynamics:}
In \refsec{sec:info-spread}, we switch our attention from an evolving trajectory state $\rho(t)\propto C(t)\rho_0 C^\dagger(t)$ to the circuit itself and study the fate of initially local information. To quantify information spreading, we introduce a reference system $R$ which is initially maximally entangled with the physical qubits in the circuit, referred to as $P$ (\reffig{fig:two_copy}). After acting the monitored dynamics on $P$ for time $t$, we use the following quantity to measure how much information within region $A_P$ at input is detectable within region $B_P$ at output:
\begin{equation}
I_{A_R: B_P}(t) \equiv S_{A_R}(t) + S_{B_P}(t)- S_{A_R\cup B_P}(t),
\end{equation}
where $A_R$ is the part of the reference system corresponding to $A_P$. 

\add{In \refsec{sec:Troperators}, we show that any Clifford monitored dynamics can be treated as a map between two quantum code spaces. We introduce the notion of \textit{logical operator transferring}, which is a natural generalization of operator evolution in unitary dynamics. Furthermore, we show that $I_{A_R: B_P}(t)$ exactly counts how many logical operators are transferred from region $A$ to the region $B$ by the dynamics $C(t)$ (Thm.\ref{thm:transferable}).}

\add{Then we focus on the behavior of information spreading in a monitored circuit.} Our findings are graphically illustrated in \reffig{fig:illu_infospread}. Roughly, local information undergoes a combination of two types of dynamics, decay and spreading, as summarized in the heat map.

\emph{Decay of local information:}
Measurements can destroy the input information. In \refsec{sec:forget}, we show that the information within an interval $A$ will be completely destroyed at a time scale:
\begin{equation}
    t_{\text{loss}}\propto |A|^2,
\end{equation}
with $I_{A_R: P}(t)=0$ for $t>t_{\text{loss}}$. In fact, this relation is fundamentally equivalent to \refeq{eq:dcode1} due to a statistical time-reversal symmetry of the dynamics, as further discussed in \refsec{sec:forget}.

\emph{Spreading of local information:}
Before the time $t_{\text{loss}}$, local information gradually delocalizes. 
In \refsec{sec:wavefront}, we numerically find that the remaining part of initially local information is contained within a region of the size
\begin{equation}
    D(t)\propto t^{\frac{3}{2}}.
\end{equation}
This super-ballistic information spreading again originates from the nonlocal effects of measurements. We also confirm that it is ballistic in unitary circuits. 


\emph{Initial state dependence:}
In \refsec{sec:alt-way}, we discuss how the behavior of information spreading changes if we fix a part of the input state. Specifically, we consider the setting where only a subsystem $A\subseteq P$ is considered as the \textit{variable} input, while the rest of the system $A^c=P-A$ has a \textit{fixed} initial state.
We find that the spreading of $A$'s information depends on the choice of the initial state for $A^c$. Namely, if $A^c$ is initially in a pure product state, then the spreading becomes linear instead of super-linear.\\

\textbf{Comparison with DRPE:}
In \refsec{sec:domain-wall}, we discuss the relation between our results and the statistical mechanics of directed polymers in random environment (DRPE), an effective theory recently proposed for the entanglement dynamics in weakly monitored circuits \cite{li2021dpre}. We demonstrate that some of our results can be derived using DPRE, while the rest are numerically consistent with DPRE.

\section{Entanglement Dynamics}\label{sec:entgen}

In this section, we study the dynamics of the stabilizer length distribution in weakly monitored circuits and then discuss its implications for the evolving state's entanglement structure and the contiguous code distance growth.
We will also discuss the initial state dependence of these properties.

\subsection{Stabilizer length distribution in clipped gauge}\label{sec:clip}

An $L$-qubit stabilizer state $\rho$, associated with a set of mutually commuting Pauli operators (dubbed stabilizers) $\stabgen=\{g_1,g_2,...g_m\}$, is defined as:
\begin{equation}
    \rho = \frac{1}{2^L}\sum_{g\in G}g,
\end{equation}
where $G=\braket{\stabgen}$ is the stabilizer group: the Abelian group generated by all possible products of the operators in $\stabgen$. If $m=L$, then the state $\rho$ will be a pure state.

Since two different sets of stabilizer generators could generate the same stabilizer group,  there is a gauge degree of freedom in choosing $\stabgen$ for a given state $\rho$. When the qubits are arranged in a 1D chain, there is a gauge choice called \textit{clipped gauge} \cite{nahum2017entanglement, li2019measurement}, that is particularly suitable for studying the entanglement structure. The clipped gauge is defined as follows. We denote the left endpoint and right endpoint of each generator $g\in \stabgen$, a Pauli string operator, as $l(g)$ and $r(g)$. We say $\stabgen$ is in the clipped gauge if the following two conditions are satisfied:
\begin{itemize}
    \item $|\{g\in\stabgen:\ l(g)=i\}| + |\{g\in\stabgen:\ r(g)=i\}| \leq 2\quad \forall i$;
\item if one of the two terms above equals 2, then two left (or right) endpoints at position $i$ must be different Pauli operators.
\end{itemize}
For any stabilizer state $\rho$, a clipped-gauged $\stabgen$ always exists, but may not be unique. However, the set of end-points pairs defined below is unique
\begin{equation}\label{eq:paris}
    \mathcal{B}(\rho)\equiv \{(l(g), r(g)):\ g\in \stabgen\}.
\end{equation}
See the appendix of Ref.~\cite{li2019measurement} for details.

The clipped gauge and the set $\mathcal{B}$ are particularly useful for studying the entanglement structure. Specifically, for any contiguous region $A$, we have \cite{li2019measurement,nahum2017entanglement}: 
\begin{equation}\label{eq:clipS}
    S_A=|A|-|\{g\in\stabgen: \supp(g)\subseteq A\}|;
\end{equation}
where $\supp(g)$ is the interval $[l(g), r(g)]$. Therefore for two neighboring contiguous regions $A$ and $B$, 
\begin{equation}\label{eq:clipMI}
\begin{aligned}
    I_{A:B} 
    &\equiv S_A + S_B - S_{AB}\\
    &=|\{g\in\stabgen: l(g)\in A, r(g)\in B\}|.
\end{aligned}
\end{equation}
Namely, for questions involving contiguous regions only, clipped-gauged stabilizer generators can be thought of as ``generators of entanglement" where entanglement across a cut is proportional to the number of bridging generators. 

Even when $A$ and $B$ are not neighboring, a certain useful entanglement measure can be explicitly computed. Suppose that $A$ and $B$ are separated by some interval $C$. Using \refeq{eq:clipS} four times, we find that the conditional mutual information is given by
\begin{equation}\label{eq:clipCMI}
\begin{aligned}
    I_{A:B|C}
    &\equiv I_{A:BC}-I_{A:C}\\
    &=
    S_{AC} + S_{BC} - S_{ABC} - S_{C}\\
    &=|\{g\in\stabgen: l(g)\in A, r(g)\in B\}|.\\
\end{aligned}
\end{equation}
This relation tells us $I_{A:B|C}$ directly measures the sizes of the stabilizers. Namely, the number of long stabilizers starting from $A$ and ending in $B$ exactly equals the conditional mutual information. 

Returning to the setting of (1+1)D monitored quantum circuits, for the evolving stabilizer state $\rho(t)$, we define its stabilizer length distribution as:
\begin{equation}\label{eq:len_dist}
    h(\ell,t, x) = |\{g\in\stabgen(\rho(t)): \mathsf{len}(g)=\ell,\ \mathsf{mid}(g)=x\}|,
\end{equation}
where $\stabgen(\rho(t))$ is a clipped-gauged set of generators for $\rho(t)$, while $\mathsf{len}(\cdot)$ and $\mathsf{mid}(\cdot)$ are defined as:
\begin{equation}
\begin{aligned}
    \mathsf{len}(g) &= r(g)-l(g)+1\\
    \mathsf{mid}(g) &= \left\lfloor \tfrac{r(g)+l(g)}{2} \right\rfloor.
\end{aligned}
\end{equation}
Evidently, $h(\ell,t, x)$ only depends on $\mathcal{B}(\rho(t))$, thus is well-defined for a given $\rho(t)$.
In the thermodynamic limit $L\rightarrow \infty$, the distribution should be independent of $x$ as the dynamics is statistically translational invariant. In numerical simulations where system size is always finite, we expect this to still be the case, especially when the typical stabilizers' endpoints $x\pm \ell/2$ are far from the system's boundary. We omit the $x$ dependence of $h$ from now on.

\begin{figure}[h]
\centering{
\subfloat[]{\includegraphics[width=0.40\textwidth]{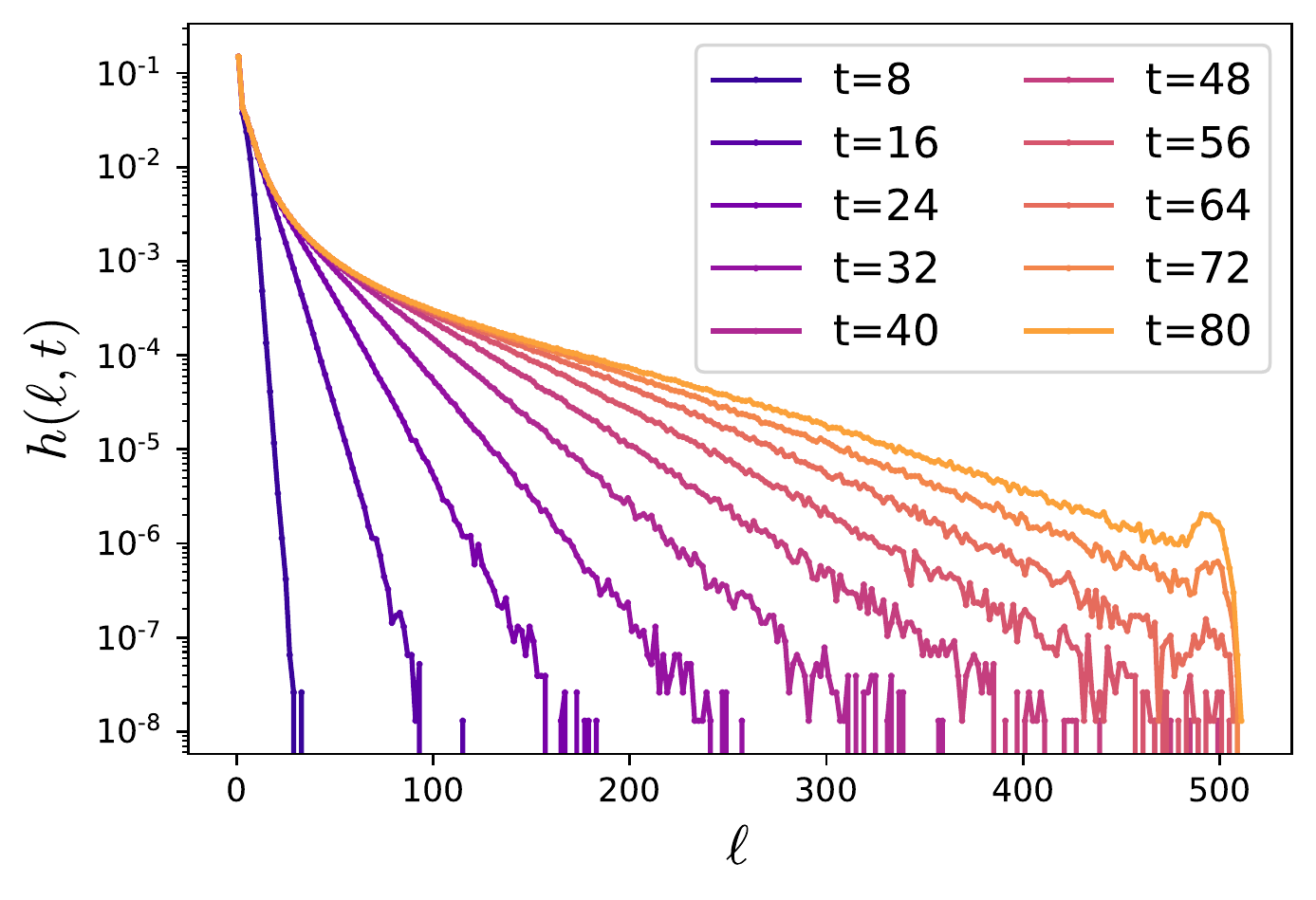}}\\
\subfloat[]{
\stackinset{r}{0.36\linewidth}{b}{0.13\linewidth}
{\includegraphics[width=0.14\textwidth]{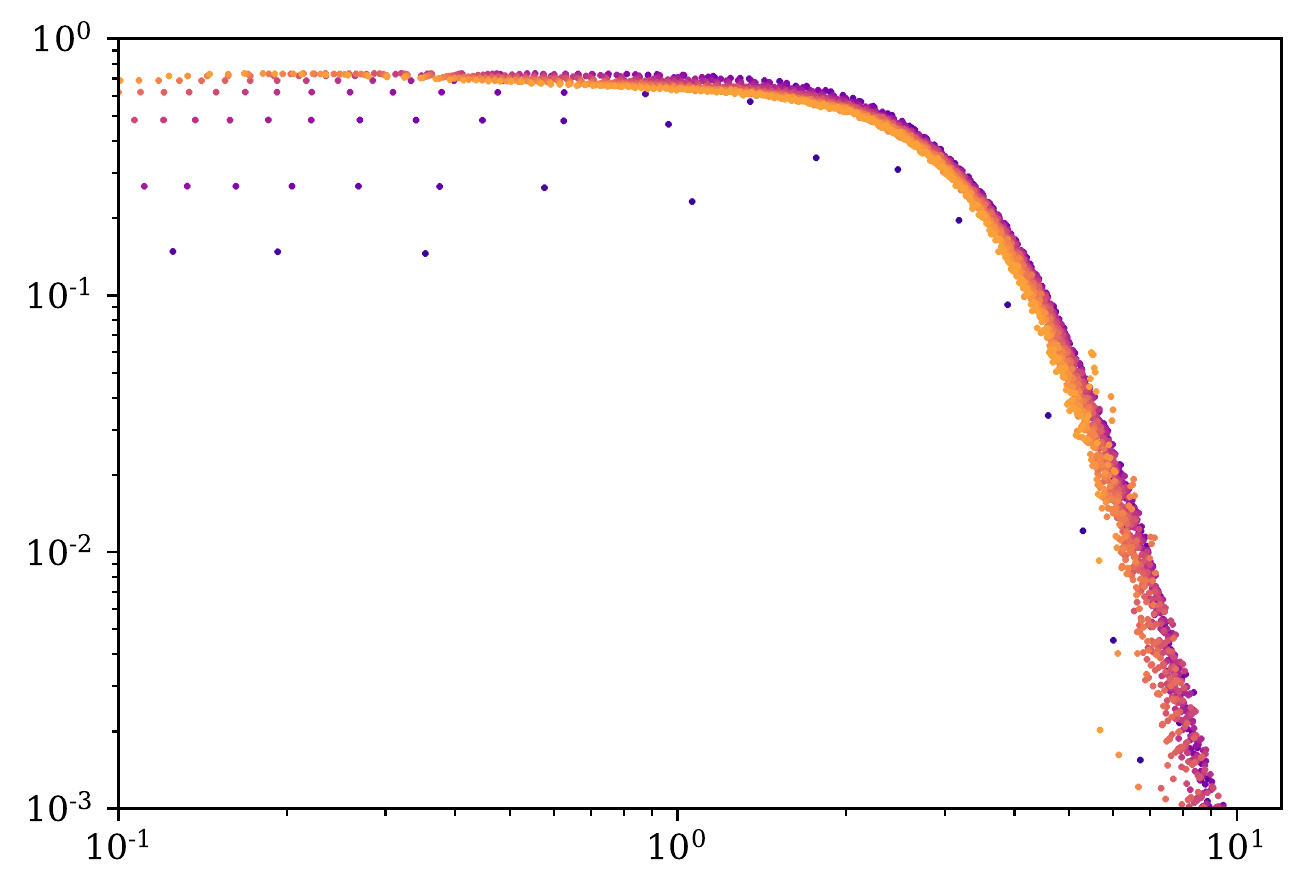}}
{\includegraphics[width=0.37\textwidth]{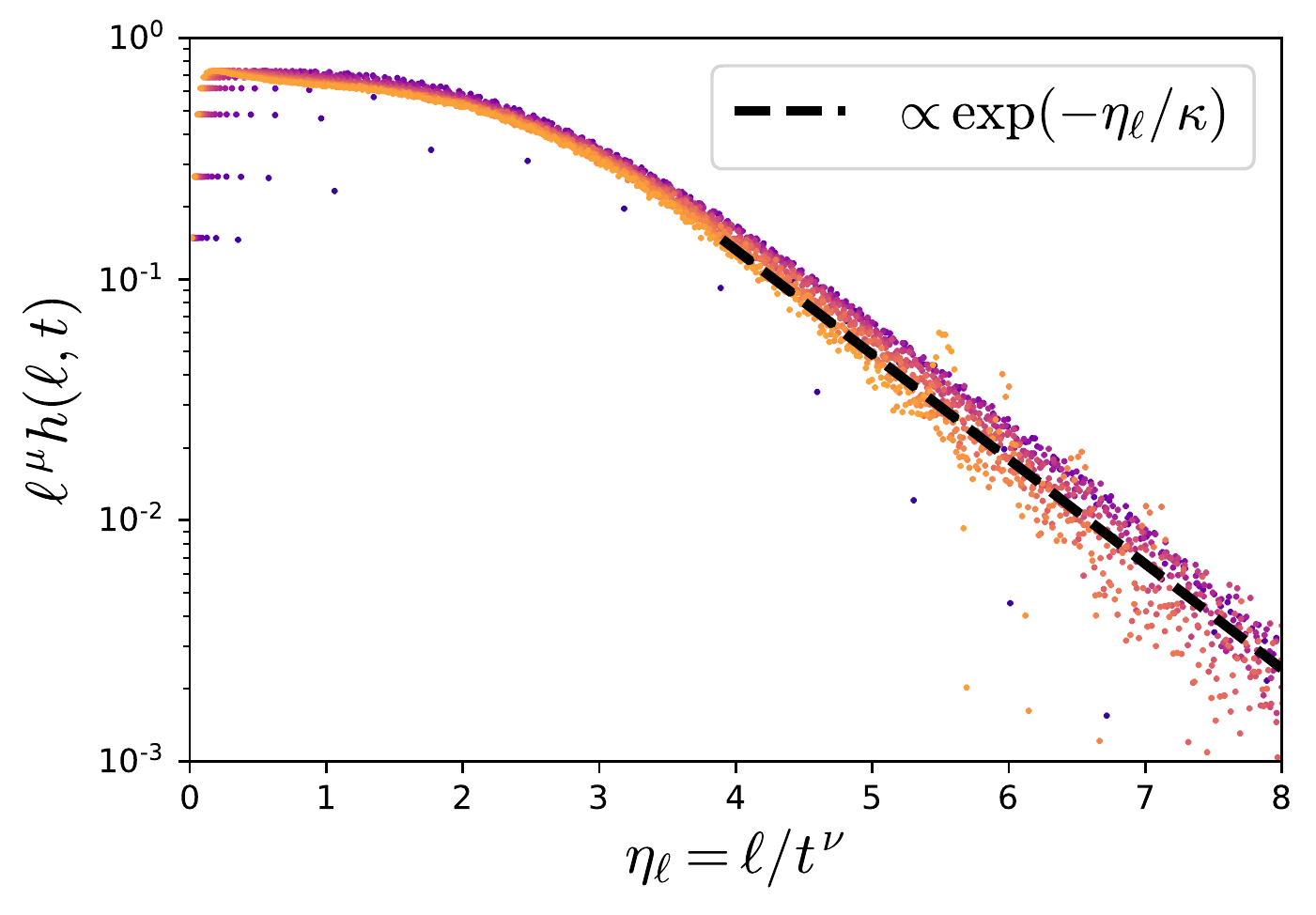}}
}
}
\caption{
(a) Stabilizer length distribution $h(\ell,t)$ in an $L=512$ circuit with open boundary condition. (b) The same data after rescaling the x and y axes by $t^{-\nu}$ and $t^\mu$, respectively. The inset shows the same plot on a log-log scale.}
\label{fig:stab_dist}
\end{figure}

We now study how the distribution evolves with time before the system reaches steady states with a maximally mixed initial state $\rho(0)=\frac{1}{2^L}\mathbb{I}$.
The simulation result, shown in \reffig{fig:stab_dist}(a), leads us to the following scaling form for $h(\ell, t)$:
\begin{equation}\label{eq:length_dist_scale}
    h(\ell,t) = c(\ell/\ell^*(t)) \frac{e^{-\ell/\ell^*(t)}}{\ell^{\mu}},
\end{equation}
with $c(x)$ is a slow varying function that takes non-vanishing $O(1)$ value for any $x\in(0,\infty)$. 
Namely, below the length scale $\ell^*(t)$, the distribution has an equilibrium form that possesses a time-independent power-law tail. 
On the other hand, above the length scale $\ell^*(t)$, the distribution exhibits an exponential tail, indicating that stabilizer generators of size larger than $\ell^*(t)$ are extremely rare and are essentially negligible.

Furthermore, we find that the cutoff $\ell^*(t)$ grows according to a power law in time: 
\begin{equation}\label{eq:lengthscale}
\ell^*(t) \propto t^\nu. 
\end{equation}
This observation, together with the validity of the scaling form in Eq.~\eqref{eq:length_dist_scale}, is reflected by the data collapse of $h(\ell,t)$ at different time $t$,  shown in \reffig{fig:stab_dist}(b). The best collapse is given by $\mu=1.66$ and $\nu=1.5$, telling us that $\ell^*(t)\propto t^\nu$ grows super-linearly with time.
As we will discuss later in \refsec{sec:domain-wall}, these results are consistent with the domain-wall picture which predicts $\mu = \frac{5}{3}$ and $\nu = \frac{3}{2}$. 

These results suggest that a small but non-negligible fraction of stabilizer generators will grow super-linearly, namely $\sim t^{3/2}$. We will later demonstrate that these fast-growing stabilizer generators can be indeed detected by a multipartite entanglement measure, namely the conditional mutual information $I_{A:B|C}$.

\subsection{Entropy of a contiguous region}\label{sec:S_cont}

We have seen that the distribution $h(\ell,t)$ possesses a super-linear length scale $\ell^*(t)\propto t^{\nu}$. Given the close relation between $h(\ell,t)$ and the $\rho(t)$'s entanglement structure, we expect the length scale to leave some imprint on the $\rho(t)$'s entanglement structure. In this subsection we analyze the simplest case: the entanglement entropy $S_{A}$ of a contiguous subsystem $A$.

Recalling that the stabilizer distribution $h(\ell,t)$ contains information about $\rho(t)$'s entanglement structure \refeq{eq:clipS}, we have: 
\begin{equation}
\begin{aligned}\label{eq:vN}
    S_A(t) 
    &\equiv |A|-|\{g\in\stabgen: \supp(g)\subseteq A\}|\\
    &= |A| - \sum_{l=1}^{|A|}(|A|-\ell)h(\ell,t)\\
    &\approx |A| - \int_{a}^{|A|}(|A|-\ell)h(\ell,t) \d \ell.
\end{aligned}
\end{equation}
In the last (approximate) equality, we introduced a UV cutoff $a=O(1)$ in order to regularize the integral.

To build some intuition, let us begin by studying late-time cases with $\ell^*(t) \gg |A|$. Then, the distribution $h(\ell,t)$ in the above integral is proportional to $\ell^{-\mu}$. Performing the integral, we find that the entanglement entropy $S_{A}$ approaches to the following steady form for $t \gg |A|^{1/\nu}$:
\begin{align}\label{eq:SsmallA}
S_{A} \approx b_{1}|A| + b_{2} |A|^{2-\mu},
\end{align}
where $b_{1}, b_{2}\geq 0$ are some constants. 
Hence, we find that the entanglement entropy consists of the leading volume-law term as well as a sub-leading correction proportional to $|A|^{2-\mu}$ with $2-\mu \approx \frac{1}{3}$. Furthremore, we find that the saturation time $t\sim |A|^{1/\nu} \approx |A|^{\frac{2}{3}}$ is sublinear.  

Next, let us turn our attention to the early time cases with $\ell^*(t) \ll |A|$. The scaling behavior of the integral can be analyzed by approximating the distribution $h(\ell,t)$ as $\ell^{-\nu}$ with a cutoff at $\ell^*(t)$, which leads to the following result for $t \ll |A|^{1/\nu}$:
\begin{align}\label{eq:SlargeA}
S_{A} \approx b_{1}|A| + b_{2}^{(1)} |A| \ell^*(t)^{-(\mu-1)}   + b_{2}^{(2)} \ell^*(t)^{2-\mu},
\end{align}
with some constants $b_{1}, b_{2}^{(1)}, b_{2}^{(2)}> 0$. 

Here it is worth looking at the dynamical aspect of each term. The first term obviously represents the time-independent volume-law contribution which survives at late times as well. As for the second term, we notice that it is proportional to $|A|$, and hence contributes to the volume-law entanglement. As $t$ increases, however, the second term becomes smaller, reducing the volume-law contribution. Explicitly, the second term decays as 
\begin{align}\label{eq:b21}
 b_{2}^{(1)} |A| \ell^*(t)^{-(\mu-1)}  \sim c_1|A| t^{-\nu(\mu-1)}
\end{align}
until it becomes order of $|A|^{2-\mu}$ at $t \sim |A|^{1/\nu}$. Here it is natural to interpret this decay as the loss of the volume-law entanglement due to projective measurements in the bulk of the interval. As for the third term, it is initially zero, and then grows as 
\begin{align}\label{eq:b22}
b_{2}^{(2)} \ell^*(t)^{2-\mu} \sim c_2 t^{\nu(2-\mu)}
\end{align}
until it becomes order of $|A|^{2-\mu}$ at $t \sim |A|^{1/\nu}$. Since it does not depend on $|A|$, it is natural to interpret this growth as the creation of entanglement due to projective measurements followed by unitary evolution. 

Plugging $\mu=\frac{5}{3}$ and $\nu=\frac{3}{2}$ in, we obtain the following behavior for $S_A$ when $\ell^*(t)\ll |A|$:
\begin{equation}\label{eq:SlargeA2}
    S_A \approx (b_1+c_1 t^{-1}) |A| + c_2 t^{\frac{1}{2}}.
\end{equation}
It is worth emphasizing that, while both the $c_1$ and the $c_2$ terms eventually contribute the sub-leading terms $|A|^{1/3}$ at saturation when $t \sim |A|^{2/3}$, their origins are of different nature. It is also useful to note that the subleading term in the above equation implies that the mutual information $I_{A:B}$ between small neighboring regions $A,B$ grows sublinearly as $t^{1/2}$ at early time, which is consistent with an observation from \cite{li2021dpre}.

The time dependence of $S_A(t)$ at any time scale can also be studied by explicitly performing the integral \refeq{eq:length_dist_scale}. Since $c(x)$ takes $O(1)$ values throughout $x\in (0, \infty)$, its specific form does not change any scaling behavior of our interest. Hence, by setting $c(x)=1$, we obtain:
\begin{equation}
S_{A}= b_1(a) |A| + b_2(\eta) |A|^{2-\mu} + o(a, |A|^0),
\end{equation}
with
\begin{equation}
\begin{aligned}
    \eta &= |A|/t^{\nu},\\
    b_1(a) &= 1+a^{1-\mu}/(1-\mu),\\
    b_2(\eta) 
  &=\eta^{\mu-1}\Gamma(1-\mu, 0, \eta) + \eta^{\mu-2}
   \Gamma(2 - \mu,0,\eta),
\end{aligned}
\end{equation}
where $\Gamma(z,x_1,x_2)=\int_{x_1}^{x_2} s^{z-1}e^{-s}\d s$ is the generalized incomplete gamma function. The key observation is that the time-dependent part $b_{2}(\eta) |A|^{2-\mu}$ depends on $t$ and $A$ only through the ratio $\eta= |A|/t^{\nu}$.

\subsection{Multipartite entanglement generation}\label{sec:CMI}

We have observed that the stabilizer size distribution grows super-linearly and leaves some imprints on the sub-leading behavior of $S_{A}$ for a contiguous subsystem $A$. These contributions, however, are often hidden in the sub-leading terms, and are not manifest in simple bipartite entanglement measures. In this section, we probe the super-linear stabilizer growth using multipartite entanglement measures. 

The underlying difficulty in detecting the super-linear stabilizer growth is the absence of bi-partite entanglement in a weakly monitored circuit. Namely, the mutual information $I_{A:B}$ for two distant subsystems $A,B$ remains almost zero at any time. Here, we begin by presenting a derivation of a universal expression of the mutual information from the stabilizer length distribution. 

Let $A$, $C$ and $B$ be three consecutive intervals, with $C$ separating two disjoint intervals $A$ and $B$. Recall that the mutual information $I_{A:B}$ is given by 
\begin{align}
I_{A:B}=S_{A} + S_{B} - S_{AB}.
\end{align}
Here, $S_{A}, S_B$ can be computed from the stabilizer length distributions as $A,B$ are single contiguous regions. However, $S_{AB}$ is the entanglement entropy for a union of two disjoint intervals and cannot be readily computed. [Note that \refeq{eq:clipMI} considers the case where $A,B$ are neighboring.]

In order to overcome this difficulty, we shall utilize a stabilizer counting argument combined with a certain probabilistic assumption about the Pauli operator contents of stabilizers~\cite{yoshida2021decoding}. 
Let $\rho(t)$ be an evolving stabilizer state in a monitored circuit, with $G$ being $\rho(t)$'s stabilizer group. 
In the stabilizer formalism, the mutual information between $A$ and $B$ is given by:
\begin{equation}
    I_{A:B} =  \log |G_{AB}| - \log |G_{A}| - \log |G_{B}|,
\end{equation}
where $G_X$ is the subgroup of $G$ that only acts non-trivially on the subsystem $X$.

We now estimate $\log |G_{AB}|$, by assuming that each stabilizer in $G_{ABC}$ has random content within its support. Requiring stabilizers to be identity on $C$ would impose $2|C|$ independent constraints. However, since all stabilizers must commute with each other, in particular with elements in $G_C$, $\log |G_C|$ constraints are automatically satisfied. Hence, we arrive at the following estimation:
\begin{equation}
\begin{aligned}
        \log |G_{AB}|&\approx \log |G_{ABC}|-2|C|+\log |G_C|\\
        &=|AB|-S_{ABC}-S_C.
\end{aligned}
\end{equation}

The estimation combined with exact formulas for $\log |G_{A}|$ and $\log |G_{B}|$ gives rise to an unphysical negative number for $I_{A:B}$ when stabilizers with empty support on $C$ are rare. Taking this case into consideration, we arrive at the following expression for $I_{A:B}$:
\begin{equation}\label{eq:mutualinfoapp_0}
I_{A:B}\approx\max\{S_{A}+S_{B}-S_{ABC}-S_C, 0\}.
\end{equation}
In \refsec{sec:dwIAB} we will present another derivation of it based on the domain-wall picture.

The relation \refeq{eq:mutualinfoapp_0} holds regardless of $C$'s size.
In the case of $|C|>|A|,|B|$, recalling that the single interval entropy $S_{X}$ has a leading term that is linear in $|X|$ at any time, we conclude the first term in $\max\{\cdot\}$ is always negative and $I_{A:B}\approx 0$ at any time.

We now return to the discussions of multipartite entanglement measures and their relation to the sizes of stabilizer generators. The fact that $I_{A:B}\approx 0$ prompts us to consider multipartite entanglement measures. A particularly useful choice is the tri-partite information~\cite{hayden2013holographic}:
\begin{equation}
\begin{aligned}
    I_{A:B:C} 
    &\equiv S_A + S_B + S_C - S_{AB}-S_{AC}-S_{BC}+S_{ABC},
\end{aligned}
\end{equation}
which is also known as the topological entanglement entropy in studies of topological phases of matter~\cite{kitaev2006topological}. Roughly, the negativity of $I_{A:B:C}$ implies the presence of quantum entanglement among four subsystems $A,B,C$ and their complement $(ABC)^c$. See~\cite{hayden2007black, hosur2016chaos} for detailed discussions on various properties of $I_{A:B:C}$.

It turns out that, in a weakly monitored circuit, the tri-partite information is equivalent to the conditional mutual information:
\begin{equation}\label{eq:cmi}
    \begin{aligned}
  I_{A:B:C} = I_{A:B}- I_{A:B|C}\approx -I_{A:B|C}
    \end{aligned}
\end{equation}
since $I_{A:B}\approx 0$. 
This enables us to evaluate $I_{A:B:C}$ using the stabilizer length distribution via \refeq{eq:clipCMI}.

Let us consider two regions $A$ and $B$ separated by a long interval $C$. For concreteness, we let $|A|=|B|=r|C|$ for some small $r\ll1$, so that $|C|$ is the only length scale in the problem.
Recalling that $I_{A:B|C}$ exactly measures the number of stabilizers that connect $A$ and $B$, and the longest stabilizer at time $t$ has a size $\ell^*(t)$, we conclude $I_{A:B|C}(t)$ should go from zero to non-zero when $\ell^*\sim |C|\ \Leftrightarrow t\propto |C|^{2/3}$.

An alternative and more quantitative way is the following. Using \refeq{eq:vN}, we know that $I_{A:B|C}$ must take a form
\begin{align}\label{eq:cmiscaling}
    I_{A:B|C}(t)=g(|C|/t^\nu)|C|^{2-\mu}.
\end{align}
This implies that the time-scale $t^*$ for $I_{A:B:C}$ to develop must be proportional to $|C|^{1/\nu}=|C|^{2/3}$, \textit{i.e.} $t^*$ grows sub-linearly with the $A$ and $B$'s separation. 

Both analyses lead to the conclusion that: $A$ and $B$ start to get entangled in a time scale that is sublinear in their distance $|C|$. We verify this statement numerically by simulating $I_{A: B|C}$ explicitly, as displayed in \reffig{fig:cmi}(a). The \reffig{fig:cmi}(b) shows the collapse of the simulated $I_{A:B|C}$ according to the scaling form \refeq{eq:cmiscaling}, showing the validity of the scale form \refeq{eq:cmiscaling}.
\begin{figure}[h]
    \centering
    \subfloat[]{
    \stackinset{r}{0.48\linewidth}{b}{0.30\linewidth}
    {\includegraphics[width=0.10\textwidth]{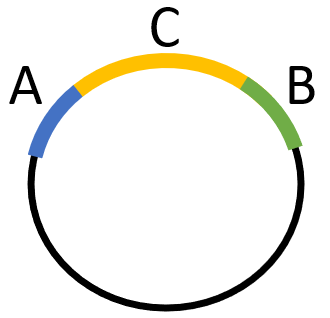}}
    {\includegraphics[width=0.40\textwidth]{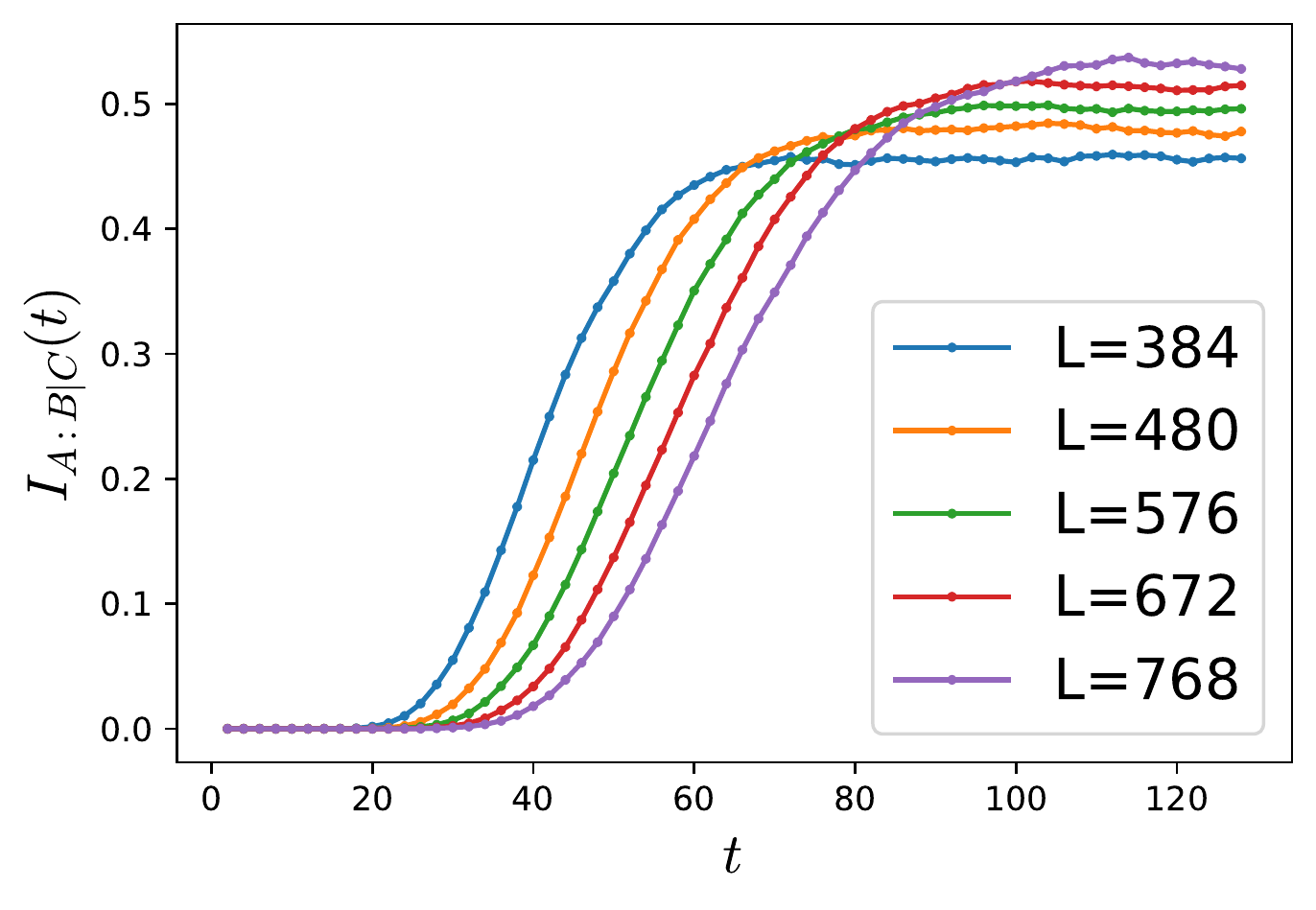}}
    }\\
    \subfloat[]{\includegraphics[width=0.40\textwidth]{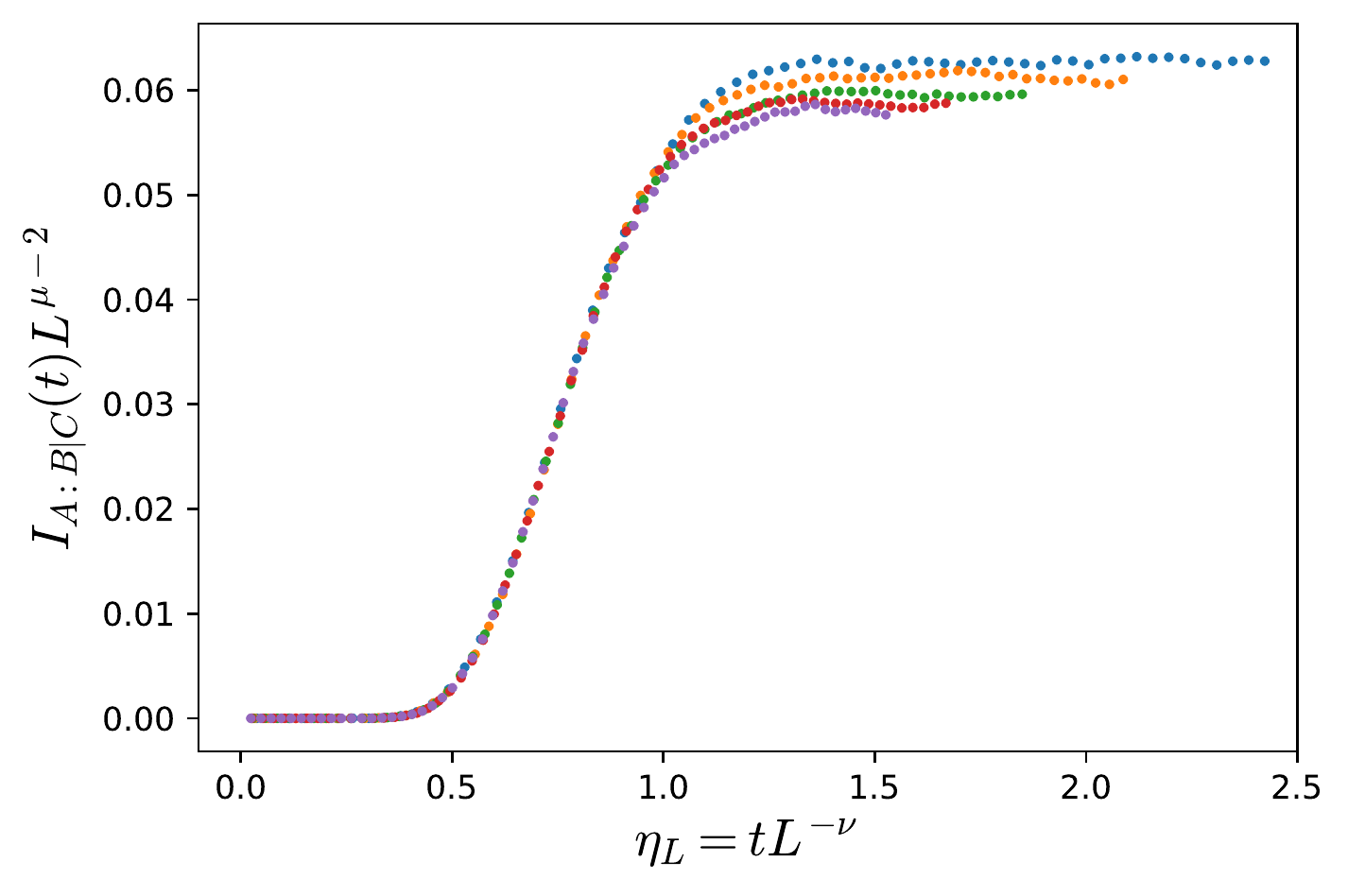}}
    \caption{(a) Dynamics of conditional mutual information $I_{A:B|C}(t)$, on a periodic spin chain for various $L$. Inset shows the geometry of $A$, $B$ and $C$, whose sizes are $|A|=|B|=\frac{5}{24}|C|=\frac{1}{8}L$ (b) The collapse of $I_{A:B|C}$'s data using the scaling
    form \refeq{eq:cmiscaling}, with $(\mu, \nu) = (1.66, 1.5)$. }
    \label{fig:cmi} 
\end{figure}

\subsection{Code distance growth}\label{sec:coded}

In this subsection, we study the entanglement dynamics from the perspective of quantum error-correcting codes. Namely, we argue that the sublinear time scale $t\sim L^{\frac{2}{3}}$ corresponds to the encoding time of a monitored quantum circuit. 

For a stabilizer code $\rho$, the number of independent logical operators in a region $A$ equals $I_{A:R}(\ket{\rho})$, where $\ket{\rho}$ is a purification of $\rho$ and $R$ is the reference system \cite{li2021statistical}. Thus, the code distance $d_{\mathrm{code}}$ can be expressed as:
\begin{equation}
    d_{\text{code}} = \min \{|A|:I_{A:R}(\ket{\rho})>0\}.
\end{equation}
In fact, $I_{A:R}(\ket{\rho})$ can be computed from $\rho$ directly. Denote $A^c$ as the complement of $A$ in the physical system and notice that $\ket{\rho}$ is a pure state on $AA^cR$. We then have:
\begin{equation}
    I_{A:R}(\ket{\rho}) = 2S_A(\rho) - I_{A:A^c}(\rho), 
\end{equation}
where the r.h.s. can be determined by the entanglement structure of $\rho$.

Returning to monitored quantum circuits, recall that the evolving state $\rho(t)$ can be viewed as a stabilizer error-correcting code whose dynamical code space is the subspace where $\rho(t)$ is supported \cite{gullans2020dynamical, li2021statistical}. The (contiguous) code distance $d_{\mathrm{code}}$ corresponds to the size of a minimal contiguous subsystem that supports a logical operator.

Let us begin with the equilibrium case with $t\gg L^{\frac{2}{3}}$. It has been shown that $d_{\mathrm{code}}$ is closely related to the subleading term of the steady state entanglement entropy in \refeq{eq:SsmallA}. Namely, if $d_{\mathrm{code}}\propto L^{\gamma_{\mathrm{code}}}$, then the entanglement entropy $S_{A}$ must have a sub-leading term with $|A|^{\gamma_{\mathrm{code}}}$~\cite{yoshida2021decoding}. Using this observation, we obtain the following relation of scaling exponents:
\begin{align}
 \gamma_{\mathrm{code}} = 2-\mu \approx \frac{1}{3},
\end{align}
which relates the power-law tail of the stabilizer distribution and the code distance. This is consistent with the numerical estimate of $\gamma_{\mathrm{code}}$ from~\cite{li2021statistical, li2021dpre}.

Next, let us consider the intermediate case with $t \ll L^{\frac{2}{3}}$.
Here, we find that the code distance grows as $d_{\mathrm{code}}\sim t^{1/2}$ before it reaches the equilibrium value of $\sim L^{1/3}$ at a time scale $t \sim L^{2/3}$. This can be derived by following an argument from~\cite{yoshida2021decoding}. Namely, we pick two equal-sized contiguous regions $B_1$ and $B_2$ to the left and right of $A$. When $|B_1|=|B_2|$ is sufficiently large, we have $I_{A: A^c}=I_{A:B}$ where $B=B_1\cup B_2$. 
By making use of the relation \refeq{eq:mutualinfoapp_0}, we obtain:
\begin{equation}\label{eq:IAB}
\begin{aligned}
2S_A-I_{A:B}
&=S_{AB}+S_A-S_{B_1}-S_{B_2}+I_{B_1:B_2}\\
&\approx\max\{0,S_{AB}+S_A-S_{B_1}-S_{B_2}\}.
\end{aligned}
\end{equation}
The second term is negative when $|A|$ is small, and it gradually switches to positive values as $|A|$ increases. In this regime, we also have $I_{B_1:B_2}\approx 0$. Thus, the code distance is given by the size of $A$ that satisfies:
\begin{equation}
    S_{B_1}+S_{B_2}-S_{AB}-S_A \approx 0.
\end{equation}
Recalling that $B_1$, $B_2$, and $A\cup B$ are contiguous regions, we can compute their entanglement entropies using \refeq{eq:SlargeA2}. When $|A|\ll t^{3/2}$, we find
\begin{equation}
\begin{aligned}
    &2\left((b_1 + c_1 t^{-1})|B_1|+c_2 t^{1/2}\right)\\
    &\approx 
    (b_1 + c_1 t^{-1})(2|B_1|+|A|)+c_2 t^{1/2} + b_1|A| + b_2 |A|^{1/3}\\
    &\Rightarrow 
    \ d_{\text{code}} =|A|\ \approx \frac{c_2 t^{1/2}}{2b_1+c_1 t^{-1}}\sim t^{1/2}.
\end{aligned}
\end{equation}
We confirmed this scaling numerically, by simulating the following quantity:
\begin{equation}\label{eq:epsilon_code_dist}
    d_{\text{code},\epsilon}(t)\equiv \text{argmax}_{|A|} (I_{R:A}(t)\leq\epsilon).
\end{equation} 

\begin{figure}[h]
    \centering
    \includegraphics[width=0.40\textwidth]{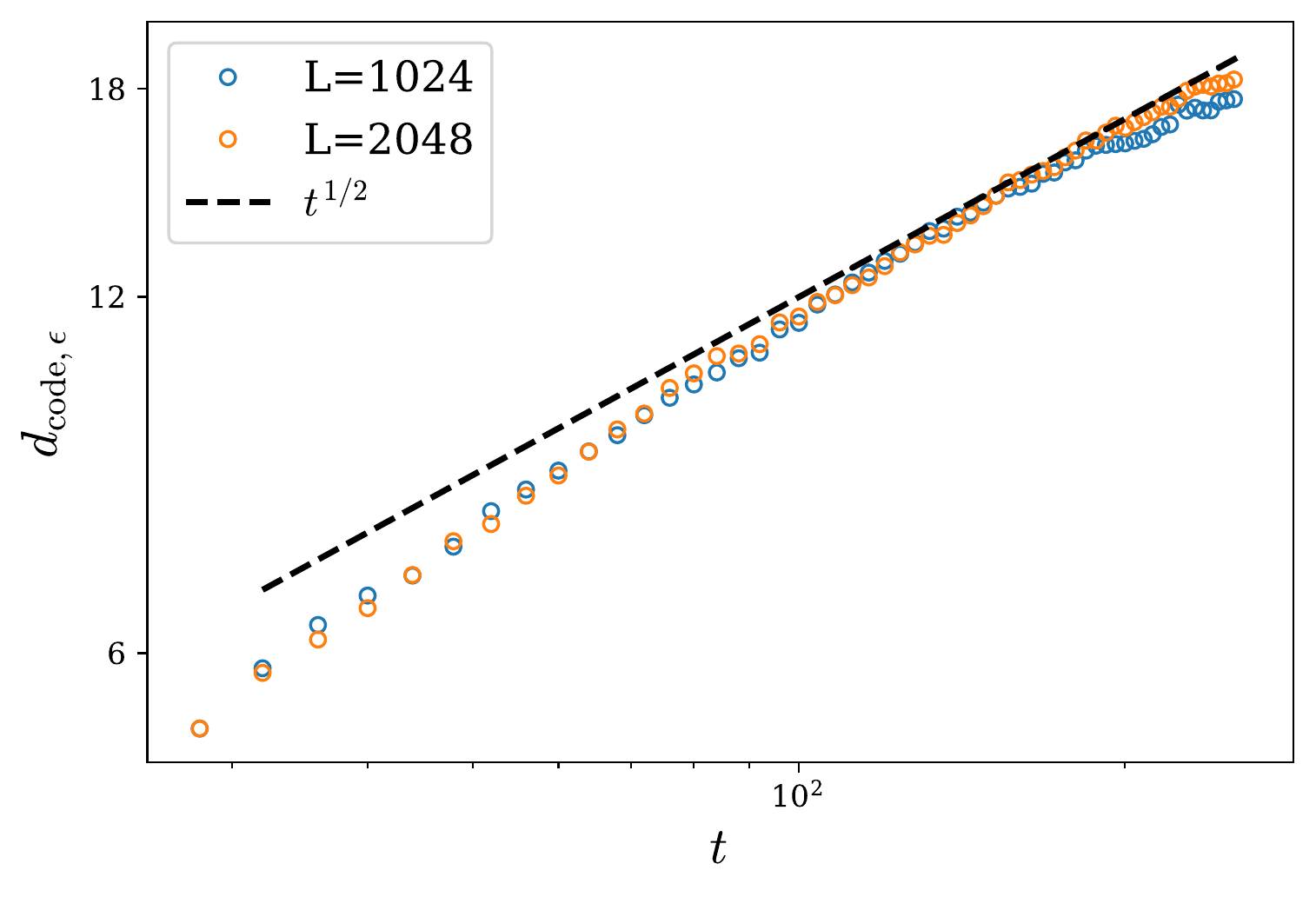}
    \caption{Contiguous code distance, as defined in \refeq{eq:epsilon_code_dist}, plotted in the log-log scale. $\epsilon$ is taken to be $1$.}
    \label{fig:code_dist} 
\end{figure}

\subsection{Code equilibrium v.s. entanglement equilibrium}
Superlinear growth of stabilizer generators leads to a new sublinear time scale $t \simeq L^{\frac{2}{3}}$ which can be identified as an encoding time of a dynamically generated stabilizer code. Indeed, we have seen that the stabilizer length distribution, as well as the contiguous code distance, reaches a steady form after $t \simeq L^{\frac{2}{3}}$. This suggests that the resulting output state $\rho(t)$ is close to a maximally mixed state inside the dynamically generated code subspace after the code equilibrium time of $t \simeq L^{\frac{2}{3}}$. 

This observation, however, appears to be in tension with previous results on monitored quantum circuits. Namely, if one starts the circuit from states with low entanglement (such as product states), the evolving state will acquire a steady form of entanglement structure and reach the volume-law entanglement only at a much later time of $t \simeq L$ \cite{skinner2019measurement}. Some readers might wonder if this suggests that the encoding time would actually be $t \simeq L$ instead of $t \simeq L^{\frac{2}{3}}$. 

The key to resolving this apparent puzzle is to observe that not all the states in the code subspace possess volume-law entanglement. 
To understand this point, let us consider the entanglement entropy $S_{A}(t)$ for different initial states at the encoding time $t \simeq L^{\frac{2}{3}}$. 
When the initial state of the circuit is a product state, the output state $|\psi_{\mathrm{prod}(t)}\rangle$ does not possess a volume-law entanglement. Namely, we have $S_{A}(t) \simeq |A|$ only up to $|A|\lessapprox L^{\frac{2}{3}}$, and $S_{A}(t)\simeq L^{\frac{2}{3}}$ for $|A|\gtrapprox L^{\frac{2}{3}}$. 
These output states $|\psi_{\mathrm{prod}(t)}\rangle$, however, are still in the code subspace. 
Next, when the initial state of the circuit is a Haar random state, the output state $|\psi_{\mathrm{Haar}(t)}\rangle$ has a volume-law entanglement. 
Namely, one can show that $S_{A}$ for $|\psi_{\mathrm{Haar}(t)}\rangle$ behaves in a similar way to the one for the case with a maximally mixed initial state as long as $|A|\leq \frac{L}{2}$~\cite{yoshida2021decoding}. 
Here, $|\psi_{\mathrm{Haar}(t)}\rangle$ corresponds to a typical random state inside the code subspace. 

The aforementioned observation suggests that, while there exist atypical codeword states $|\psi_{\mathrm{prod}(t)}\rangle$ with $\simeq L^{\frac{2}{3}}$ entanglement, typical codeword states indeed possess the volume-law entanglement. 
This resolves the puzzle concerning $t \simeq L^{\frac{2}{3}}$ v.s. $t \simeq L$; the former is the required time to reach the codeword state while the latter is the time to attain entanglement properties of typical codeword states.

\subsection{Pure v.s. mixed initial state}\label{sec:pure_vs_mix}

So far in this section, we have focused on cases where the initial state of the circuit is the maximally mixed state. A naturally arising question concerns what will happen if we start from a pure product state instead. It has been previously noted that the two choices lead to strikingly different behaviors of the entanglement dynamics at late time \cite{li2019measurement, gullans2020dynamical}. Here we revisit this problem, focusing on whether the super-linear growth is also present in the pure initial state case.

\begin{figure}[h]
\centering
\subfloat[]{\includegraphics[width=0.40\textwidth]{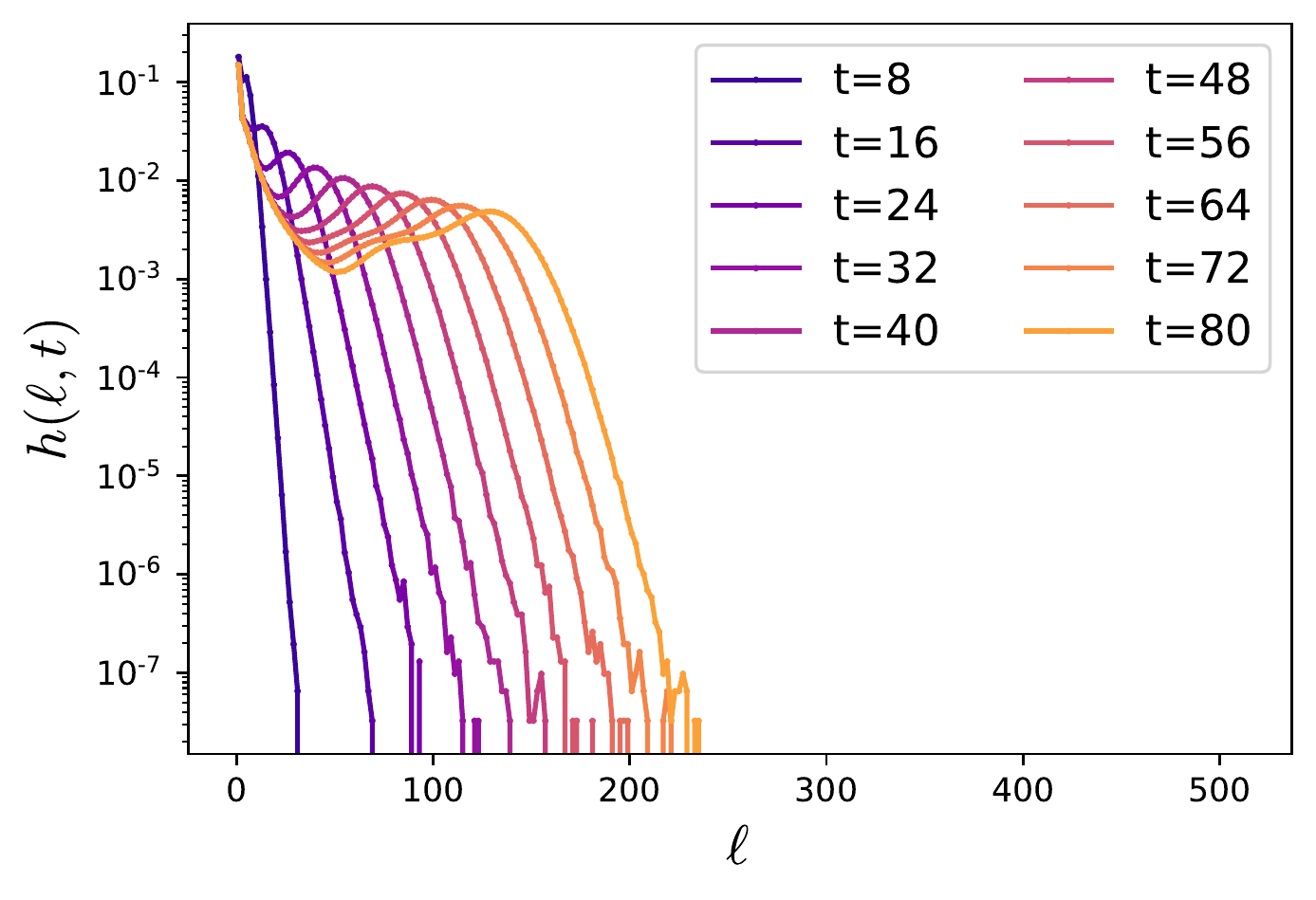}}\\
\subfloat[]{
\stackinset{r}{0.45\linewidth}{b}{0.38\linewidth}
{\includegraphics[width=0.13\textwidth]{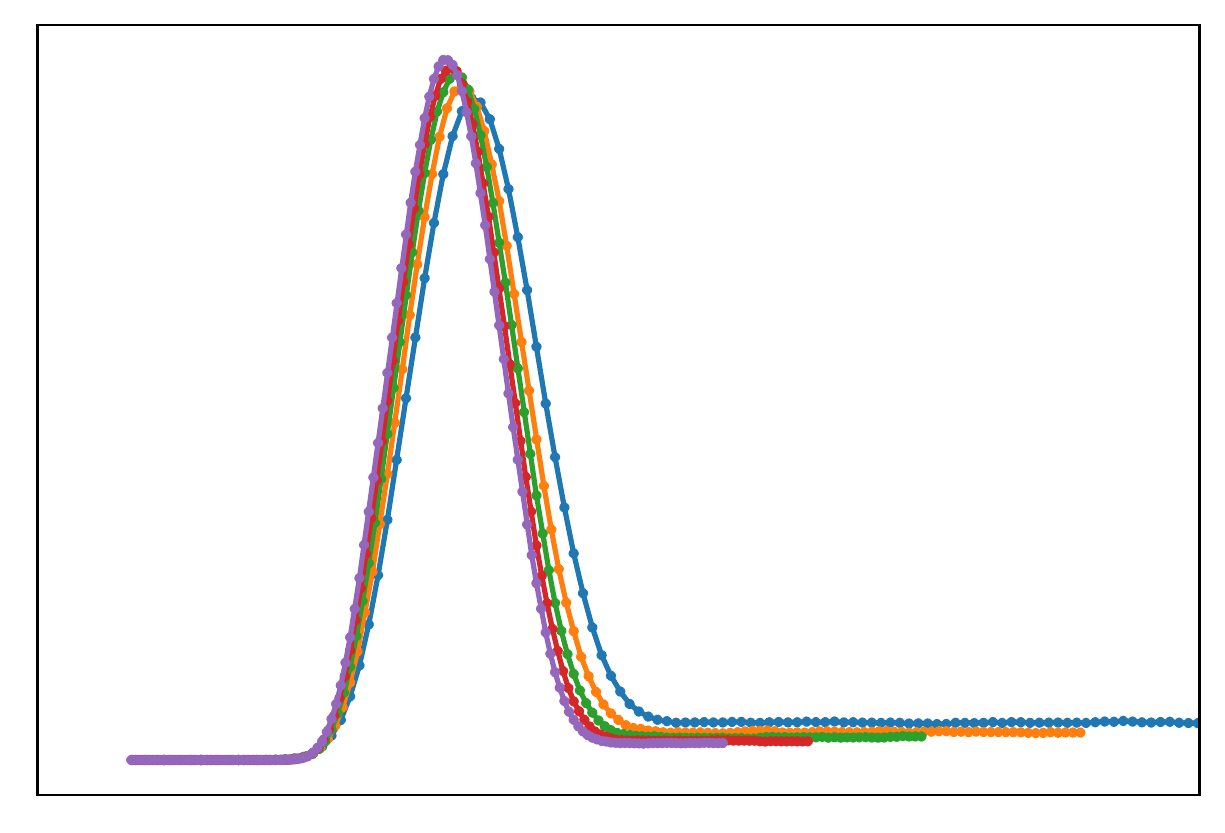}}
{\includegraphics[width=0.40\textwidth]{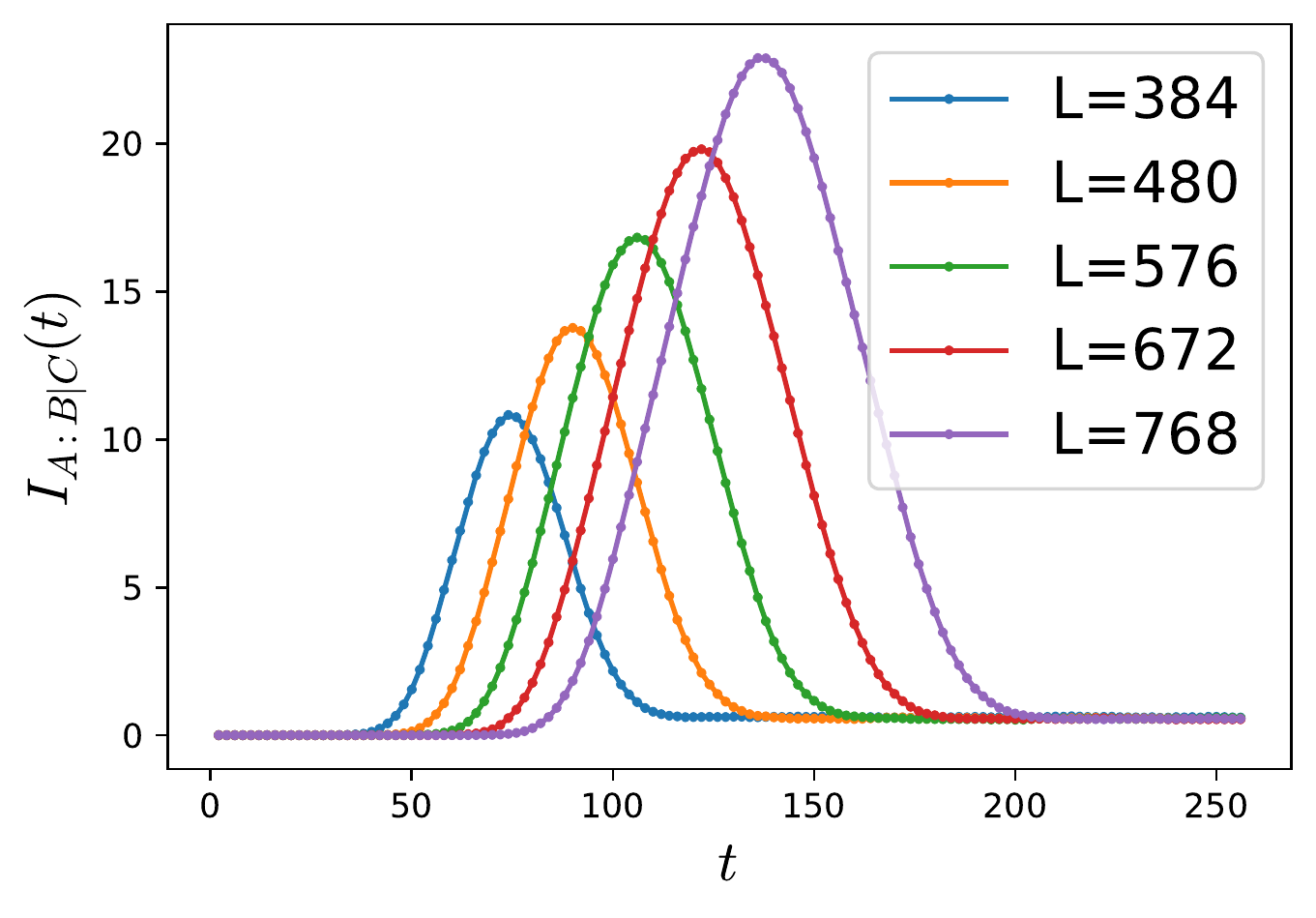}}
}
\caption{Numerical results for a pure initial state. (a) Stabilizer length distribution $h(l,t)$ with $L=512$ and open boundary condition, within the same time period as in \reffig{fig:stab_dist} (b) Dynamics of $I_{A:B|C}(t)$, with $A$, $B$ and $C$ arranged in the same manner as in \reffig{fig:cmi}(a). The inset shows the same data but with both axes rescaled by a factor of $L^{-1}$.}
\label{fig:pure_initial_state}
\end{figure}

We present the simulated $h(\ell, t)$ for a pure product initial state in the \reffig{fig:pure_initial_state}(a). Recall that, for a pure product state, all stabilizer generators are of size one, and thus, the initial distribution is a sharp peak $h(\ell, t=0)\propto\delta(\ell-1)$. For $t>0$, the short stabilizer generators will grow due to the action of unitary gates, which corresponds to a right-moving peak. On the left side of the peak, one finds a power-law decaying profile that remains unchanged even after the peak passes. Here, the exponent $\frac{5}{3}$ is the same as the one observed in the mixed state case. This suggests that the power-law profile in $h(\ell,t)$ is present regardless of the choice of initial states. On the right side of the peak, $h(\ell, t)$ exponentially decays to zero, with an exponent that does not seem to be strongly dependent on time. 

Importantly, numerical results suggest that the pure evolving state does not possess any super-linear growth. Indeed, the peak only moves linearly with $t$, and $h(\ell,t)$ decays quickly after the peak. This is also reflected in the simulation of $I_{A: B|C}(t)$ [\reffig{fig:pure_initial_state}(b)]. We find that the time scale for $I_{A:B|C}$ to take non-zero value scales linearly with $|C|$, in contrast to the $t \sim |C|^{\frac{3}{2}}$ in the mixed state case. 

To gain some understanding of entanglement dynamics for this case, we propose the following scaling form of $h(\ell, t)$ which captures the salient features described above:
\begin{equation}
    h(\ell,t) = c'_1 \frac{\Theta(\ell^*(t)-\ell)}{\ell^{5/3}} + c'_2 \delta(\ell-\ell^*(t)),
\end{equation}
where $\ell^*(t)\approx t$ is the position of the peak. Though the peak in the figure does get broader with time, we approximate it with a delta function for ease of discussion. 

By making use of \refeq{eq:clipS} and the initial condition $S_A(0)=0$, we derive that a contiguous region's entanglement entropy has the following form:
\begin{equation}
    S_A(t) = 
    \left\{
    \begin{array}{lll}
         b_1 t + b_2 t^{\frac{1}{3}} & &\ell^*(t)<|A| \\
         b_1|A| + b_2|A|^{\frac{1}{3}} & &\ell^*(t)>|A|
    \end{array}
    \right.
\end{equation}
where the $b_{1}$ and $b_2$ are the same as those appeared in \refeq{eq:SsmallA}.
This entanglement growth is similar to the one in the pure unitary dynamics starting from a pure state, where $S_A$ typically grows linearly until saturation. 

We can also use the expression above to compute $I_{A:B|C}(t)$. By keeping the leading term with $b_{1}$ only, we have
\begin{equation}
\begin{aligned}
    &I_{A:B|C}(t)\\
    =& (S_{AC} + S_{BC} - S_{ABC} - S_{C})(t)\\
    =&
        \left\{
        \begin{array}{llc}
             0 & &\frac{\ell^*(t)}{|C|}<1 \\
             b_1(t-|C|) & &1<\frac{\ell^*(t)}{|C|}<1+r\\
             b_1((1+2r)|C|-t) &  &1+r<\frac{\ell^*(t)}{|C|}<1+2r\\
             0 & &\frac{\ell^*(t)}{|C|}>1+2r
        \end{array}
        \right.
\end{aligned}
\end{equation}
which indeed captures salient features of the simulated $I_{A:B|C}(t)$ in \reffig{fig:pure_initial_state}(b).

\section{Information dynamics}\label{sec:info-spread}

In the previous section, we studied how the entanglement structure of the output state $\rho_{\mathrm{out}}(t)$ evolves. In this section, we consider how local information about the input state spreads out by studying correlations between the input and the output states. 

\subsection{Input-Output correlation}

In order to study information spreading, it is convenient to introduce a reference system $R=\{1_R, 2_R, ..., L_R\}$ for the original physical system $P=\{1_P, 2_P, ..., L_P\}$. Each qubit in $R$ is initially maximally entangled with its partner in $P$, and the monitored circuit acts on the physical system only, see \reffig{fig:two_copy} for an illustration. The resulting state is
\begin{equation}\label{eq:vectorize}
\ket{\phi_{\textbf{m}}(t)}\propto \sum_{\textbf{s}\in \{0,1\}^L}(C_{\textbf{m}}(t)\ket{\textbf{s}})_P\otimes \ket{\textbf{s}}_R
\end{equation}
where $C_{\textbf{m}}$ is defined in a way similar to \refeq{eq:evolution} and $\textbf{m}$ represents the measurement outcome. The output state $\rho_{\mathrm{out}}(t)$ of a monitored circuit, when starting from a maximally mixed state, can be obtained by tracing out the reference system $R$ as $\rho_{\mathrm{out}}(t) = \mathrm{Tr}_{R} \big(\ket{\phi_{\textbf{m}}(t)} \bra{\phi_{\textbf{m}}(t)}\big)$.

\begin{figure}[h]
\centering
\includegraphics[width=0.3\textwidth]{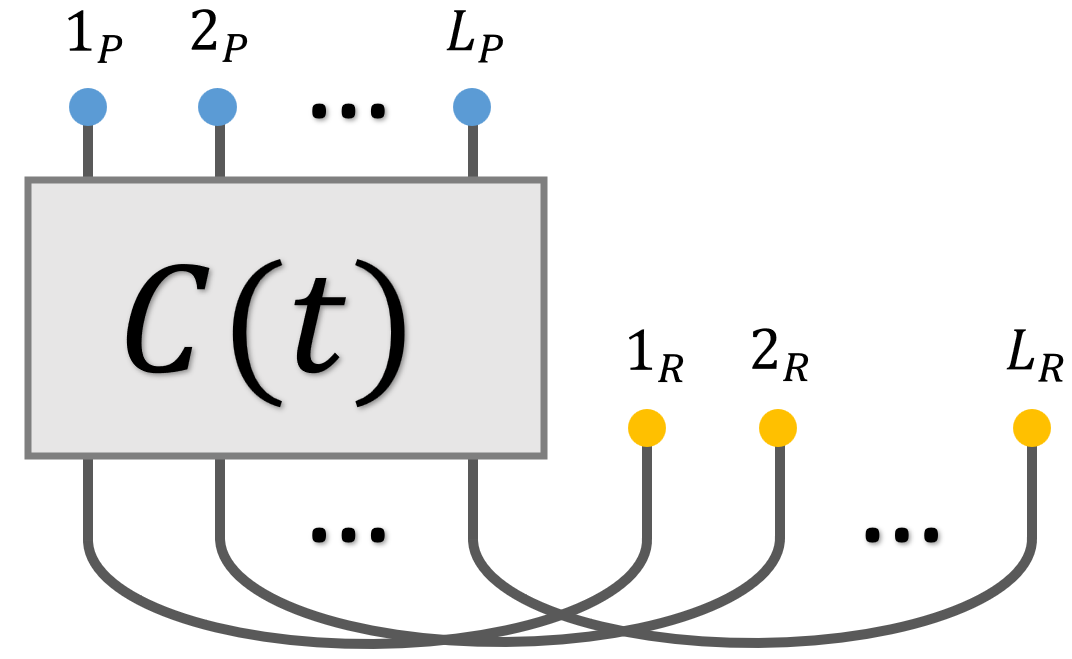}
\caption{
A visualization of $\ket{\phi_{\textbf{m}}(t)}$ defined in \refeq{eq:vectorize}. Physical qubits ($P$, blue dots) and reference qubits ($R$, yellow dots) are initially maximally entangled and the circuit acts on the physical qubits only.
}
\label{fig:two_copy}
\end{figure}

Under the action of monitored circuits, some of the information initially contained in a subregion $A_P$ will gradually spread without being eliminated by measurements. 
To quantify this effect, we utilize the mutual information between the output and the input spaces:
\begin{equation}\label{eq:mi}
I_{A_R: B_P}(t) = (S_{A_R} + S_{B_P} - S_{A_R\cup B_P})(\ket{\phi_{\textbf{m}}(t)})
\end{equation}
for various subregions $B_P\subseteq P$. Intuitively, this quantity measures how much information is transmitted from $A$ to $B$ by the monitored circuit $C_{\textbf{m}}(t)$. 

In the remainder of this section, we will provide a physical interpretation of $I_{A_R: B_P}(t)$ and its relation to the notion of operator spreading. 
We then study the behavior of $I_{A_R: B_P}(t)$ in both monitored and unitary $(1+1)$-dimensional circuits.

\subsection{Isomorphism between quantum codes}\label{sec:Troperators}

A single trajectory of Clifford monitored quantum dynamics $C_{\textbf{m}}$ can be viewed as an isomorphism between two quantum error-correcting codes. To see this, we perform the singular value decomposition to $C_{\textbf{m}}$:
\begin{equation}\label{eq:Csvd}
    C_{\textbf{m}} = \sum_{i=1}^K \ket{v_i}\bra{u_i}.
\end{equation}
Note that all the non-zero singular values of $C_{\textbf{m}}$ are $1$ as $C_{\textbf{m}}$ is a Clifford monitored dynamics. The decomposition defines two stabilizer code subspaces: $\hilbert_{\text{in}}=\langle u_1,..., u_K \rangle$ and $\hilbert_{\text{out}}=\langle v_1,..., v_K \rangle$, and corresponding projectors are given by
\begin{align}
P_{\text{in}}=C_{\textbf{m}}^\dagger C_{\textbf{m}}, \quad P_{\text{out}}=C_{\textbf{m}}C_{\textbf{m}}^\dagger.
\end{align}
Furthermore, the input and output stabilizer code states $\rho_{\text{in}}, \rho_{\text{out}}$ are given by maximally mixed states of $|u_{i}\rangle$ and $|v_{i}\rangle$ respectively.

Using this decomposition, we can view the dynamics $C_{\textbf{m}}$ as a two-step process: for an incoming state $\ket{\psi}$, $C_{\textbf{m}}$ will first discard the part of $\ket{\psi}$ that is orthogonal to $\hilbert_{\text{in}}$:
\begin{equation}
    \ket{\psi}\Rightarrow P_{\text{in}}\ket{\psi},
\end{equation}
then unitarily transform the remaining wavefunction into the subspace $\hilbert_{\text{in}}$:
\begin{equation}
    P_{\text{in}}\ket{\psi} \Rightarrow C_{\textbf{m}} P_{\text{in}}\ket{\psi}.
\end{equation}
Hence, $\hilbert_{\text{in}}$ is the input subspace whose information is preserved, while $\hilbert_{\text{out}}$ is the subspace into which the preserved information is unitarily processed. 

By viewing the monitored dynamics $C_{\textbf{m}}$ as an isomorphism between two stabilizer codes, a correspondence between logical operations for $\hilbert_{\text{in}}$ and for $\hilbert_{\text{out}}$ can be naturally defined. 
Assume that $O$ and $\tilde{O}$ are non-trivial logical operators defined on $\hilbert_{\text{in}}$ and $\hilbert_{\text{out}}$ respectively. We say that $O$ is \textit{transferred} to $\tilde{O}$ by the dynamics $C_{\textbf{m}}$ if and only if:
\begin{equation}\label{eq:transfer_logical}
    C_{\textbf{m}} O = \tilde{O} C_{\textbf{m}},
\end{equation}
namely, applying $O$ on the input state is equivalent to applying $\tilde{O}$ on the output state. It is worth noting that this is not a one-to-one correspondence as logical operators have multiple equivalent expressions which act in the same manner in code subspaces.

The notion of transferred operators, defined above, is a natural generalization of operator spreading to non-unitary settings. 
Note that, in the special case that $C$ is a unitary evolution, both $\hilbert_{\text{in}}$ and $\hilbert_{\text{out}}$ are the total Hilbert spaces and any operator $O$ is a logical operator. 
In particular, its transferred operator $\tilde{O}$ is uniquely given by its unitary evolution under $C$ as $\tilde{O}=C^\dagger O C$.

The above discussions enable us to discuss information spreading in a quantitative manner by studying the growth of logical operators under an isomorphism $C_{\textbf{m}}$. Letting $A\subseteq P$ (input) and $B\subseteq P$ (output) be two subsets of qubits, one can ask how many inequivalent input logical operators defined on $A$, can be transferred to output logical operators defined on $B$. Namely, we are interested in the number of inequivalent $O_A$ such that:
\begin{equation}\label{eq:recoverlogical}
\exists \tilde{O}_B\text{ supported on $B$} \quad \tilde{O}_B C_{\textbf{m}}=C_{\textbf{m}}O_A.
\end{equation}

The following theorem relates this number to the mutual information $I_{A_R:B_P}$ defined in \refeq{eq:mi}:
\begin{theorem}\label{thm:transferable}
$I_{A_R:B_P}$ equals the number of independent and inequivalent Pauli logical operators supported on $A$ that can be transferred by $C_{\textbf{m}}$ to some logical operator within $B$. 
\end{theorem}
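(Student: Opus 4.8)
The plan is to compute $I_{A_R:B_P}$ directly from the stabilizer group of $\ket{\phi_{\textbf{m}}(t)}$ and match the resulting count against logical operators of $C_{\textbf{m}}$ that can be written with support on $A$ on the input side and $B$ on the output side. First I would set up the stabilizer description: since $\ket{\phi_{\textbf{m}}(t)}$ is (up to normalization) $\sum_{\textbf{s}} (C_{\textbf{m}}\ket{\textbf{s}})_P \otimes \ket{\textbf{s}}_R$, its stabilizer group is generated by operators of the form $(C_{\textbf{m}} X_j C_{\textbf{m}}^{-1})_P \otimes (X_j)_R$ and $(C_{\textbf{m}} Z_j C_{\textbf{m}}^{-1})_P \otimes (Z_j)_R$ on the support of $P_{\text{in}}$, together with the stabilizers of $\rho_{\text{out}}$ acting on $P$ alone and the stabilizers of $\rho_{\text{in}}$ acting on $R$ alone (these encode the kernel/cokernel of $C_{\textbf{m}}$). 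The Clifford structure makes "$C_{\textbf{m}} O C_{\textbf{m}}^{-1}$'' well defined on the code subspace as a Pauli operator, which is exactly the transferred operator $\tilde O$ of \refeq{eq:transfer_logical}.

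Next I would use the standard stabilizer formula for mutual information in a pure state: for the pure state $\ket{\phi_{\textbf{m}}(t)}$ on $A_R \cup B_P \cup (\text{rest})$, $I_{A_R:B_P} = S_{A_R} + S_{B_P} - S_{A_R B_P}$, and each entropy is $|{\cdot}| - \log_2|G_{\cdot}|$ where $G_X$ is the subgroup of the full stabilizer group $G$ supported entirely on $X$. Hence
\begin{equation}
I_{A_R:B_P} = \log_2|G_{A_R B_P}| - \log_2|G_{A_R}| - \log_2|G_{B_P}|,
\end{equation}
which counts stabilizer elements supported jointly on $A_R$ and $B_P$, modulo those already supported on $A_R$ alone or on $B_P$ alone. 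The key step is to identify these "genuinely bipartite'' stabilizer elements with the transferred logical operators: an element $g \in G_{A_R B_P}$ decomposes as $g = (\tilde O_B)_P \otimes (O_A)_R$ with $\tilde O_B$ on $B$ and $O_A$ on $A$; since $g$ stabilizes $\ket{\phi_{\textbf{m}}(t)}$ one shows $\tilde O_B C_{\textbf{m}} = C_{\textbf{m}} O_A^{T}$ (the transpose coming from how $R$ is glued to $P$), i.e. $O_A^T$ is a logical operator on $\hilbert_{\text{in}}$ supported on $A$ that is transferred to a logical operator supported on $B$. Quotienting by $G_{A_R}$ and $G_{B_P}$ removes exactly the operators that act trivially on the code subspace (trivial logical operators) or that come purely from $\ker C_{\textbf{m}}$ / $\coker C_{\textbf{m}}$, leaving the count of \emph{independent, inequivalent} transferred logical operators. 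Conversely, given such a transferred pair $(O_A,\tilde O_B)$, the operator $(\tilde O_B)_P \otimes (O_A^T)_R$ commutes with all generators of $G$ and hence lies in $G$ (stabilizer groups are maximal), establishing the bijection and matching the two counts.

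The main obstacle I anticipate is the careful bookkeeping at the boundary of the code subspace: making precise that "logical operator supported on $A$'' is counted modulo multiplication by stabilizers of $\rho_{\text{in}}$ (the ``gauge'' of equivalent logical representatives) and that this is exactly what dividing by $|G_{A_R}|$ achieves, while simultaneously not over- or under-counting the purely-$P$ part $G_{B_P}$ which mixes stabilizers of $\rho_{\text{out}}$ with the $\coker$ directions. A clean way to handle this is to pick symplectic bases: choose generators of $G$ in the block-adapted form described above (transferred $X$/$Z$ pairs, pure-$R$ stabilizers of $\hilbert_{\text{in}}$, pure-$P$ stabilizers of $\hilbert_{\text{out}}$), then observe that $G_{A_R}$ is generated by the pure-$R$ stabilizers localized in $A_R$ plus transferred pairs whose $P$-part happens to vanish on the code subspace, and similarly for $G_{B_P}$; the quotient then literally enumerates transferred Pauli pairs $(O_A,\tilde O_B)$ with the stated support constraints, with independence inherited from linear independence in the symplectic vector space. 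Once the bases are fixed, the equality of the two counts is a dimension count in $\mathbb{F}_2$ and requires no further hard input.
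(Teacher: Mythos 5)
Your proposal is correct and follows essentially the same route as the paper's (much terser) argument: observe that the transfer relation \refeq{eq:recoverlogical} is equivalent to $O_A^{T}\otimes \tilde{O}_B$ stabilizing $\ket{\phi_{\textbf{m}}}$, then apply the stabilizer entropy formula so that $I_{A_R:B_P}=\log_2|G_{A_RB_P}|-\log_2|G_{A_R}|-\log_2|G_{B_P}|$ counts these bipartite stabilizers modulo the purely local ones, which is exactly the count of independent, inequivalent transferable logical operators. Your extra bookkeeping about quotienting by $G_{A_R}$ and $G_{B_P}$ is the right way to make the ``inequivalent'' clause precise and matches the redundancy subtraction the paper carries out in its appendix for the closely related Theorem 2.
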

This theorem can be easily proven using \refeq{eq:clipS}, once we notice that the expression \refeq{eq:recoverlogical} is equivalent to $O_A^{T}\otimes \tilde{O}_B\ket{\phi_{\textbf{m}}}=\ket{\phi_{\textbf{m}}}$, namely $O_A^{T}\otimes \tilde{O}_B$ is a stabilizer of $\ket{\phi_{\textbf{m}}}$.

\add{
\subsection{Logical operator growth}\label{sec:reversal}
The study of operator growth plays a central role in our understanding of unitary circuits. Given the formalism developed in the previous subsection, it is natural to ask how a logical operator's size changes in time when it is transferred in a monitored circuit. In this part, we provide an argument showing that an $O(1)$ sized logical operator will be transferred to an $O(t^{3/2})$-sized one by $t$ layers of circuit dynamics.


A key part of the argument is to compare output code states $\rho_{\text{out}}$ for circuits of different depths. 
To do so, we deepen the circuit by adding new operations to the initial time slice $C$, one at a time.

First, if  we add a unitary gate, then:
\begin{equation}
    C' = C U,
\end{equation}
which induces no change to the output code state: 
\begin{equation}
    \rho_{\text{out}}' \propto C'{C'}^\dagger \propto  CC^\dagger = \rho_{\text{out}}.
\end{equation}
Next, we consider the other case that a $Z_i$ basis measurement is added to $C$:
\begin{equation}
    C' = C\frac{1+Z_i}{2}.
\end{equation}
The impact on $\rho_{\text{out}}$ depends on whether $Z_i$ is a non-trivial logical operator of $\rho_{\text{in}}\propto C^\dagger C$. Namely, if $Z_i$ is not a logical operator of $\rho_{\text{in}}$, or if $Z_i$ is just a trivial logical operator of $\rho_{\text{in}}$, then one can show that it leaves the $\rho_{\text{out}}$ unchanged;
while if $Z_i$ is a non-trivial logical operator of $\rho_{\text{in}}$, it changes $\rho_{\text{out}}$ in the following way:
\begin{equation}\label{eq:changeout}
    \rho_{\text{out}}' = \frac{1+\tilde{Z}_i}{2} \rho_{\text{out}},
\end{equation}
where $\tilde{Z}_i$ is a non-trivial logical operator of $\rho_{\text{out}}$, which is $Z_i$'s transferred operator as defined in \refeq{eq:transfer_logical}:
\begin{equation}
    C Z_i = \tilde{Z}_i C.
\end{equation}
From the perspective of $\rho_{\text{out}}$'s stabilizers $\stabgen_{\text{out}}$, the change \refeq{eq:changeout} corresponds to adding a new stabilizer to $\stabgen_{\text{out}}$:
\begin{equation}
    \stabgen'_{\text{out}}  = \stabgen_{\text{out}}\cup\{\tilde{Z}_i \}.
\end{equation}

We can now infer the size of $\tilde{Z}_i$ by relating our observations above to our earlier conclusion that $\stabgen_{\text{out}}$'s stabilizer length distribution $h(\ell, t)$ follows \refeq{eq:len_dist}. Since the only part of $h(\ell, t)$ that changes at time $t$ is the small region around $\ell=\ell^*(t)\sim t^{3/2}$, we expect that $\stabgen_{\text{out}}$'s newly added stabilizer $\tilde{Z}_i$ is of a size $\sim t^{3/2}$. Thus $Z_i$ is transferred to a $\sim t^{3/2}$ sized operator $\tilde{Z}_i$ by $C(t)$. We will confirm this conjecture numerically in \refsec{sec:wavefront}. 

Here, we would like to note one caveat in the aforementioned argument concerning the size of $\tilde{Z}_j$. 
Recall that the size distribution in \refeq{eq:len_dist} is defined for stabilizer generators in a clipped gauge.
Adding $\tilde{Z}_j$ to stabilizer generators in a clipped gauge does not necessarily create stabilizers in a clipped gauge. 
As such, the evolution of \refeq{eq:len_dist} may be sourced by the addition of $\tilde{Z}_j$ as well as some rearrangement of stabilizer generators to fit in a clipped gauge. 
Despite this subtlety, we expect that such rearrangement will not significantly alter our argument. 
For one thing, in the time regime of our interest, $\mathcal{S}_{\text{in}}(t)$ is not full rank, so there is an $O(1)$ chance that we do not need any rearrangement to a clipped gauge at all. 
Furthermore, even when $\tilde{Z}_j$ breaks a clipped gauge, recombinations with other stabilizer generators do not change the size of $\tilde{Z}_j$ much. 
Hence, we expect that our estimate remains qualitatively valid. 
}

\change{
\subsection{Time reversal of monitored circuits and logical operator growth}\label{sec:reversal}

In the previous section, we have studied the entanglement and coding properties of the output state $\rho_{\mathrm{out}}(t)$ by using the stabilizer length distribution of $\mathcal{S}_{\mathrm{out}}(t)$. 
This has enabled us to explicitly evaluate various entanglement measures in the resulting state \refeq{eq:vectorize} when only the physical system $P$ is involved. 
The key difficulty in studying $I_{A_R:B_P}$ is to evaluate correlations between $P$ and $R$, and thus, to study how two codeword spaces $\rho_{\mathrm{in}}(t)$ and $\rho_{\mathrm{out}}(t)$ are related to each other. 

At first sight, this task might seem rather challenging. 
Indeed, the stabilizer distributions of the output and input codes only teach us about the structures of $\rho_{\mathrm{in}}(t)$ and $\rho_{\mathrm{out}}(t)$ and does not specify how they are related. 
It turns out, however, that this difficulty can be overcome by considering a time-reversal process of a monitored circuit.
This observation reveals a certain duality relation between the input and output codes and provides important insights into information spreading.

Let us begin by observing that the monitored circuit in our setting has a statistical time-reversal symmetry. 
Namely, imagine that we evolve the system with the time-reversed circuit $C^{\dagger}(t)$, instead of the original $C(t)$. 
Since both $C(t)$ and $C^{\dagger}(t)$ are instances of randomly sampled monitored quantum circuits, $\rho_{\text{in}}(t)\propto C^\dagger(t) C(t)$ and $\rho_{\text{in}}(t)\propto C(t)C^\dagger(t)$ have the same stabilizer length distributions, as well as the same entanglement properties statistically.

\add{If we focus on a single instance of $C(t)$} and keep track of corresponding time evolutions of $\rho_{\text{in}}(t)$ and $\rho_{\text{out}}(t)$, however, we notice that they evolve quite differently. 
Namely, let us decompose the monitored circuit $C(t)$ into $t$ components 

\begin{align}
C(t) = C_t C_{t-1} \cdots C_1.
\end{align}
We then find 
\begin{equation}
\begin{split}\label{eq:evolution_code}
\rho_{\text{in}}(t+1) &= C^{\dagger}(t) C^{\dagger}_{t+1}  C_{t+1} C(t) \\
\rho_{\text{out}}(t+1) &= C_{t+1} \rho_{\text{out}}(t)  C_{t+1}^{\dagger}.
\end{split}
\end{equation}
From these expressions, we see that the time-evolved input state $\rho_{\text{in}}(t+1)$ will always be inside the input code subspace $\rho_{\text{in}}(t)$ at the previous step. 
Namely, the time evolution may make the code subspace either invariant or shrink to a smaller one. 
On the other hand, the output state $\rho_{\text{out}}(t+1)$ may undergo non-trivial rotations from  the previous time step $\rho_{\text{out}}(t)$, as is evident from \refeq{eq:evolution_code}. 

In order to gain further insight, let us focus on a simple case where $C_{t+1}$ consists only of a single Pauli measurement with $C_{t+1}=\frac{1+Z_j}{2}$. 
We begin with the output state $\rho_{\text{out}}(t)$. There are three possibilities for the dynamics of the output state $\rho_{\text{out}}(t)$: 
\begin{enumerate}
    \item[1)] $Z_{j}$ is in the output stabilizer group $\mathcal{S}_{\text{out}}(t)$: In this case, we have $\mathcal{S}_{\text{out}}(t+1)=\mathcal{S}_{\text{out}}(t)$.
    \item[2)] $Z_{j}$ is a non-trivial logical operator of the output code specified by $\mathcal{S}_{\text{out}}(t)$. In this case, $Z_j$ will be added to the stabilizer group and we have 
    \begin{align}
    \mathcal{S}_{\text{out}}(t+1)=\mathcal{S}_{\text{out}}(t) \cup \{Z_j \}. \label{eq:output_evolution}
    \end{align}
    This is the only case where the size of the stabilizer group increases.
    \item[3)] $Z_{j}$ anti-commutes with some stabilizer. In this case, $Z_{j}$ will be added to the stabilizer group while existing stabilizer generators need to be recombined so that newly formed generators commute with $Z_j$. This process leads to the superlinear growth of stabilizer generators.
\end{enumerate}

On the contrary, the time-evolution of $\rho_{\text{in}}(t)$ is given by a rather simple mechanism. 
Namely, by inspecting \refeq{eq:evolution_code} carefully, one notices that $\rho_{\text{in}}(t)$ remains invariant for the cases 1) and 3).
This conclusion holds even when $C_{t+1}$ consists of both measurement and unitary evolution.
Focusing on the case 2), it is useful to utilize the notion of transferred logical operators. 
Namely, given a logical operator $Z_j$ for the output code, let $\tilde{Z}_j$ be the corresponding logical operator for the input code. 
We then have 
\begin{align}
\mathcal{S}_{\text{in}}(t+1) = \mathcal{S}_{\text{in}}(t) \cup \{\tilde{Z}_j \}. \label{eq:input_evolution}
\end{align}
In other words, $\mathcal{S}_{\text{in}}(t)$ grows simply by adding transferred logical operators $\tilde{Z}_j$~\footnote{Note that, in accordance with our definition in \refeq{eq:transfer_logical}, $\tilde{Z}_j$ is the transferred operator of $Z_j$ under the reversed circuit $C^{\dagger}(t)$.} 

This duality between the dynamics of $\mathcal{S}_{\text{in}}(t)$ and $\mathcal{S}_{\text{out}}(t)$, as summarized in \refeq{eq:output_evolution}\eqref{eq:input_evolution}, enables us to study how local information spreads under a monitored dynamics. 
Recall that the stabilizer length distribution of $\mathcal{S}_{\text{in}}(t)$ should follow \refeq{eq:len_dist}. 
Furthermore, the case 2) is the only situation where $\mathcal{S}_{\text{in}}(t)$ may change. 
We then naturally expect that $\tilde{Z}_j$ will be of the size  $\ell^*(t)\propto t^{\frac{3}{2}}$ as the only part of $h(\ell ,t)$ that changes at time $t$ is the region around $\ell=\ell^*(t)$. 
Hence, we expect that quantum information, initially localized on some finite interval $A$, will be transferred to a logical operator of size $\sim t^{\frac{3}{2}}$ under a monitored dynamics.  
We will confirm this conjecture numerically in \refsec{sec:wavefront}. 

Here, we would like to note one caveat in the aforementioned argument concerning the size of $\tilde{Z}_j$. 
Recall that the size distribution in \refeq{eq:len_dist} is defined for stabilizer generators in a clipped gauge.
Adding $\tilde{Z}_j$ to stabilizer generators in a clipped gauge does not necessarily create stabilizers in a clipped gauge. 
As such, the evolution of \refeq{eq:len_dist} may be sourced by the addition of $\tilde{Z}_j$ as well as some rearrangement of stabilizer generators to fit in a clipped gauge. 
Despite this subtlety, we expect that such rearrangement will not significantly alter our argument. 
For one thing, in the time regime of our interest, $\mathcal{S}_{\text{in}}(t)$ is not full rank, so there is an $O(1)$ chance that we do not need any rearrangement to a clipped gauge at all. 
Furthermore, even when $\tilde{Z}_j$ breaks a clipped gauge, recombinations with other stabilizer generators do not change the size of $\tilde{Z}_j$ much. 
Hence, we expect that our estimate remains qualitatively valid. 
}

\subsection{Loss of local information}\label{sec:forget}

In this subsection and the next, we study information spreading through numerical simulations of $I_{A_R: B_P}(t)$. We perform simulation of $I_{A_R: B_P}(t)$ for both unitary ($p=0$) and monitored ($p=0.08$) cases. We set $A=[-a, a]$ and $B=[-d, d]$, with the origin taken to be the midpoint of the corresponding system ($P$ or $R$). An illustration of the setting is shown in \reffig{fig:spreading_AandB}. 
For the brevity of the presentation, we use the symbol $\mathcal{I}_A$ as a short-hand notation for the input information within $A$. As we have discussed, $I_{A_R: B_P}(t)$ measures the amount of $\mathcal{I}_A$ detectable within region $B\subseteq P$ at time $t$.

\begin{figure}[h]
    \centering
    \includegraphics[width=0.28\textwidth]{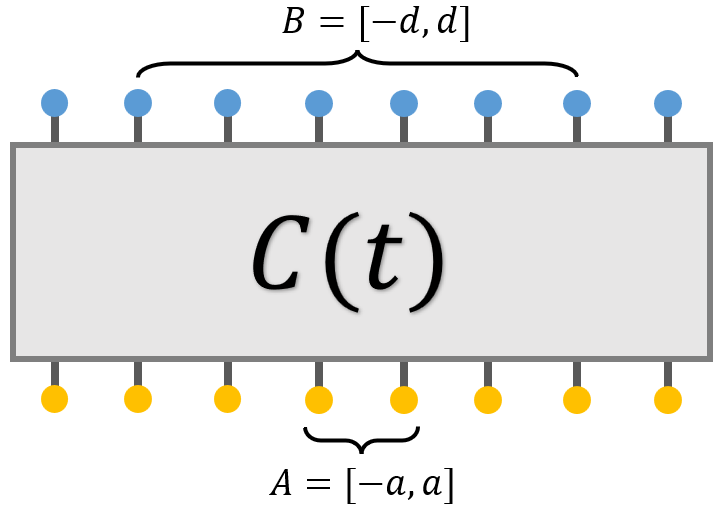}
     \caption{Arrangement of $A_R=[-a,a]_R$ and $B_P=[-d,d]_R$ in the numerical simulation. The figure shows the special case of $a=1$ and $d=3$.}
    \label{fig:spreading_AandB}
\end{figure}

We start by considering $I_{A_R:P}(t)=I_{A_R:(-\infty,\infty)_P}(t)$, \textit{i.e.} the amount of remaining $\mathcal{I}_A$ in the whole system at time $t$. In the unitary ($p=0$) case, we know:
\begin{equation}
    I_{A_R:P}(t)\equiv 2|A|,
\end{equation}
because unitary dynamics acts on $P$ only and  $R$ is always maximally entangled with $P$.  This implies that $\mathcal{I}_A$ is always preserved somewhere within the total system $P$, as is expected to be the case for a unitary dynamics.

When measurements are turned on ($p\neq 0$), they may destroy $\mathcal{I}_A$, as reflected in a gradual decay of $I_{A_R: P}(t)$, see \reffig{fig:decay}(a). It is natural to expect that $I_{A_R:P}(t)$ will eventually become zero and $\mathcal{I}_A$ will be completely lost, at some time scale that depends on $|A|=2a$. 

\begin{figure}[h]
    \centering
    \includegraphics[width=0.32\textwidth]{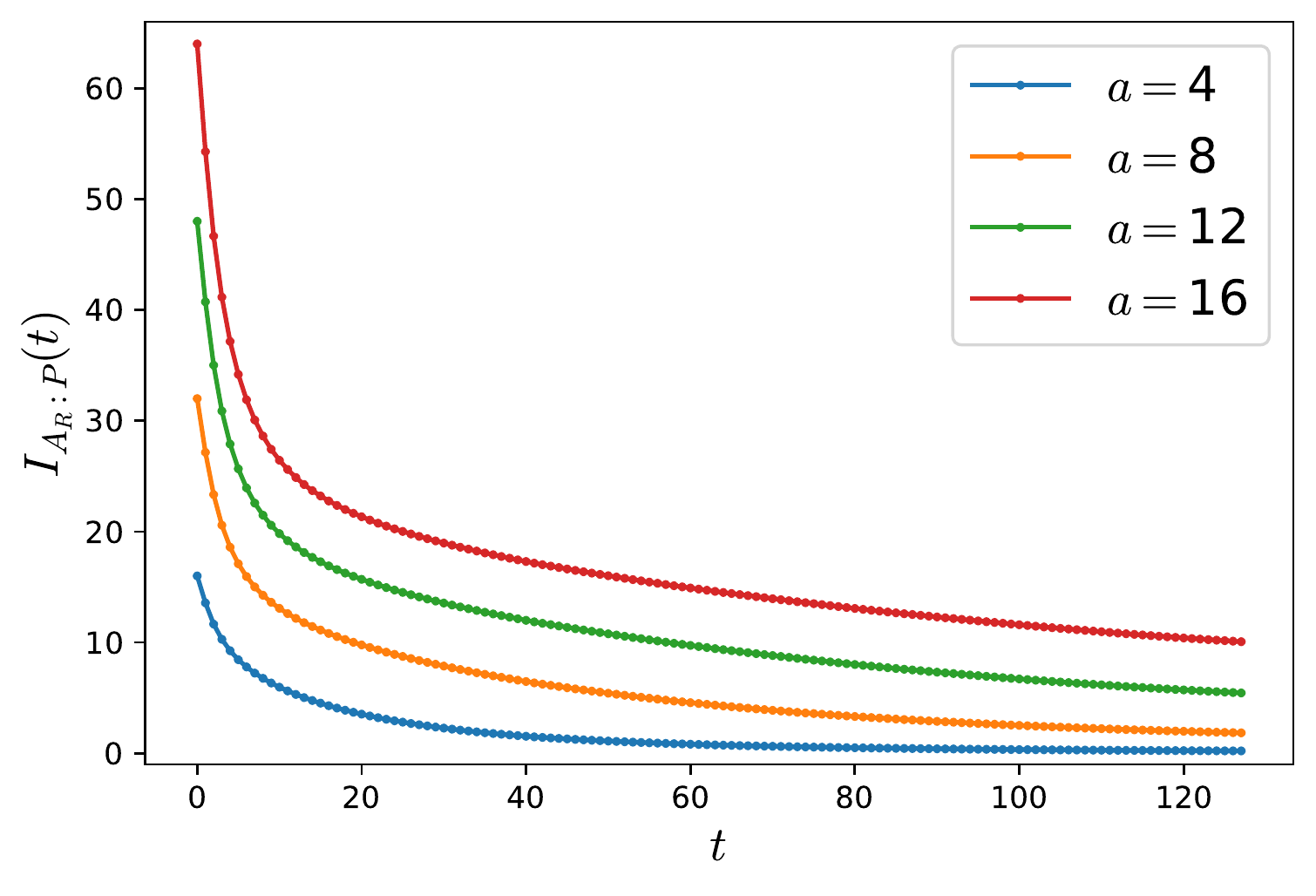}
     \caption{Decay of the remaining $\mathcal{I}_A$ within the system for various different $A$'s sizes.}
    \label{fig:decay}
\end{figure}

It turns out this problem, concerning the loss of local information, is equivalent to the code distance growth which we have studied in \refsec{sec:coded}. 
Monitored circuits respect time reversal symmetry on average: \add{ $C^\dagger(t)$, which is the reversed circuit of $C(t)$, has the same distribution as $C(t)$}.  This implies that 
\begin{equation}
    I_{A_R: P}(t) = I_{A_P: R}(t).
\end{equation}
Note that this equality is defined for averaged quantities, and may not hold for each instance.  
In \refsec{sec:coded}, we derived that the \textit{r.h.s.} vanishes when $|A|<O(\sqrt{t})$. This suggests that the time scale for $\mathcal{I}_A$ to be completely lost is given by
\begin{equation}\label{eq:timeloss}
    t^*_{\text{loss}} \sim |A|^2.
\end{equation}

\subsection{Information spreading}\label{sec:wavefront}
\begin{figure}
    \centering
    \subfloat[]{\includegraphics[width=0.45\textwidth]{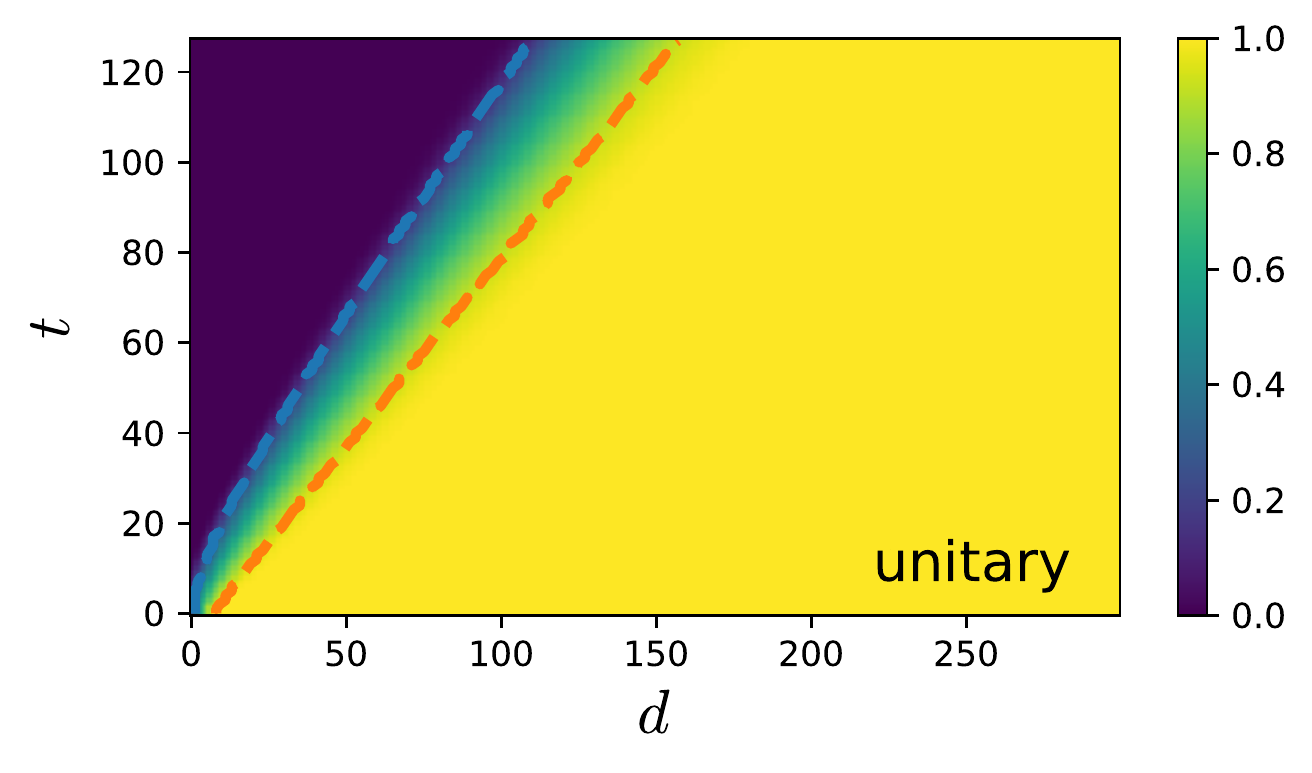}}\\
    \subfloat[]{\includegraphics[width=0.45\textwidth]{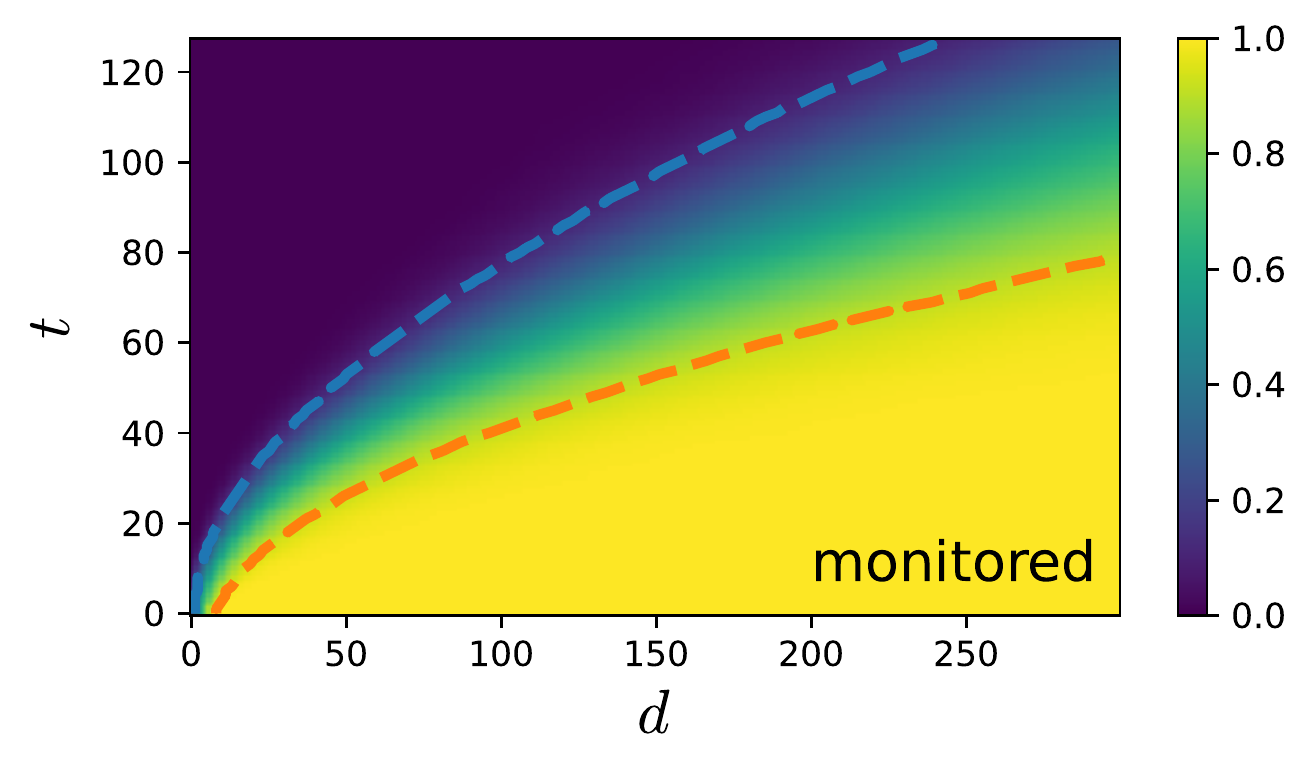}}\\
    \subfloat[]{\includegraphics[width=0.37\textwidth]{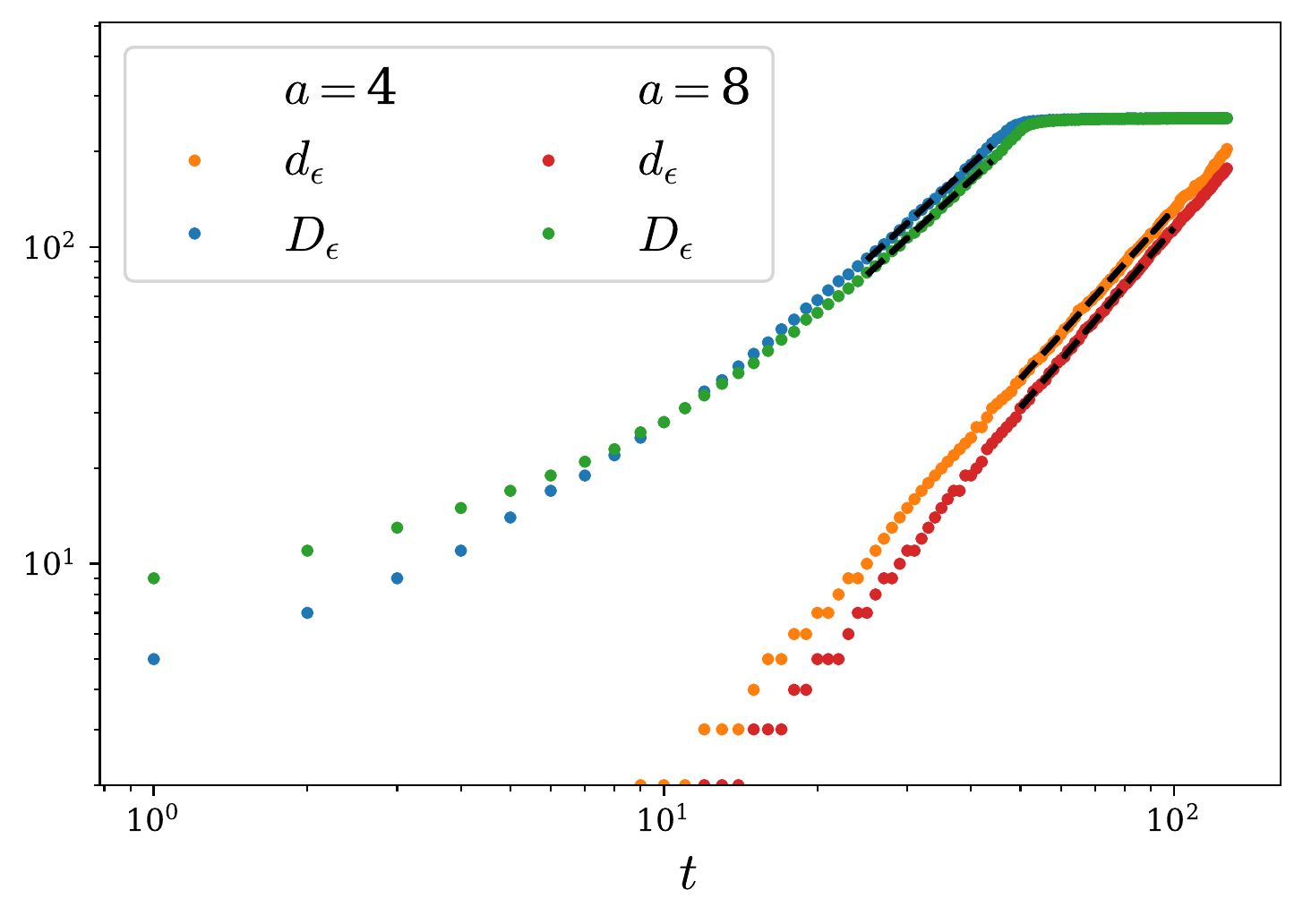}}
     \caption{Simulation results for $I_{A_R:B_P}(t)$ in a system with $L=768$ qubits. (a)(b) Heatmaps of the normalized mutual information $\mu_{A_R, B_P}$, as a function of $t$ and $|B|=2d$, for unitary and monitored dynamics respectively. (c) The growth of $d_\epsilon(t)$ and $D_\epsilon(t)$ with $\epsilon=0.01$. Both quantities develop power-law growing behaviors at late time. The $D_\epsilon$ curve has a plateau at late time as they reach the boundary of the simulated system.}
    \label{fig:spreading}
\end{figure}

Next, we study how the remaining $\mathcal{I}_A$ spreads with time. 
Since we are interested in the portion of $\mathcal{I}_A$ which was not lost by measurements, we analyze the following normalized mutual information:
\begin{equation}
    \mu(d,t) = \frac{I_{A_R:[-d,d]_P}(t)}{ I_{A_R: P}(t)}.
\end{equation}
Here, $\mu(d,t)$ measures the \textit{fraction} of survived $\mathcal{I}_A$ detectable within the region $[-d,d]$, at a given time $t$. Note that the spreading of $\mathcal{I}_A$ is only well defined before it is completely lost, \textit{i.e.} before the time scale \refeq{eq:timeloss}.

\reffig{fig:spreading} (a) and (b) show the heatmaps of normalized mutual information as a function of radius $d$ and time $t$, for unitary and monitored dynamics respectively. We find that there are three regimes: a dark blue region where $\mu(d, t)\approx 0$ for small $d$, a bright yellow region where $\mu(d, t)\approx 1$, and an intermediary transitioning region.

\begin{itemize}
\item $\mu(t, d)\approx 1$ implies that $[-d,d]$ contains all the remaining $\mathcal{I}_A$ at time t, and thus the smallest such $d$ marks the boundary of the region where the remaining $\mathcal{I}_A$ is supported on. Formally we can define this boundary as 
\begin{equation}
    D_\epsilon(t) = \mathrm{argmin}_d
    \left(\mu(d,t) \geq 1-\epsilon \right)
\end{equation}
for some $\epsilon\ll 1$, and $[-D_\epsilon(t), D_\epsilon(t)]$ is where the remaining $\mathcal{I}_A$ is suppoerted. We plotted $D_{0.1}(t)$ with orange dashed lines in \reffig{fig:spreading}(a) and (b). 

\item $\mu(t, d)\approx 0$ implies that one cannot probe any $\mathcal{I}_A$ within $[-d,d]$. We observe that, in both unitary and monitored cases, there exists a region with $\mu(t, d)\approx 0$. $[-d_\epsilon(t), d_\epsilon(t)]$ defines the largest region that no information $\mathcal{I}_A$ can be detected.
The radius of the information-less region can be quantified by ($\epsilon \ll 1$):
\begin{equation}
    d_\epsilon(t) = \mathrm{argmax}_d \left(\mu(d,t) \leq \epsilon \right).
\end{equation}
In \reffig{fig:spreading}(b), $d_{0.1}(t)$  are plotted using blue dashed lines. 
\end{itemize}

The physical meaning of $D_\epsilon(t)$ and $d_\epsilon(t)$ becomes clear by recalling the discussion in \refsec{sec:Troperators}. 
Let us assume that $O_A$ is transferred to $\tilde{O}_B$ by the dynamics $C_m$, in the sense of \refeq{eq:recoverlogical}. 
Recalling that $I_{A_R:B_P}$ measures the amount of logical operators transferred from $A$ to $B$, we conclude $\tilde{O}_B$ can be supported within $[-D_\epsilon(t),D_\epsilon(t)]$ but beyond $[-d_\epsilon(t), d_\epsilon(t)]$. \add{This is consistent with and further generalizes our conclusion in \refsec{sec:reversal}, stating that a $O(1)$ sized operator will be transferred to a $O(t^{3/2})$ sized one after $t$ steps of dynamics.}

The evolutions of the three regions define information lightcones, whose behaviors sharply differ between unitary and monitored dynamics. Here, $d_\epsilon(t)$ and $D_\epsilon(t)$ correspond to the inner and the outer front of the lightcone respectively. 
In the unitary case in \reffig{fig:spreading}(a), both lightcones grow linearly with time, consistent with emergent causality. 
In contrast, in the monitored case in \reffig{fig:spreading}(b), both lightcones are outward bent, corresponding to the super-linear spreading of $\mathcal{I}_A$. 
In \reffig{fig:spreading}(c), we plot the growth of $d_\epsilon(t)$ and $D_\epsilon(t)$ with $\epsilon=0.01$. 
Both $d_\epsilon(t)$ and $D_\epsilon(t)$ display power-law growth in late time, with the fitted exponents (fitted lines are displayed as black dashed lines in the figure):
\begin{equation}\label{eq:exponents}
    \begin{aligned}
        a=4: 
        \quad
        d_\epsilon(t)\propto t^{1.78},
        \quad
        D_{\epsilon}(t)\propto t^{1.47},\\
        a=8:
        \quad
        d_\epsilon(t)\propto t^{1.89},
        \quad
        D_{\epsilon}(t)\propto t^{1.46}.
    \end{aligned}
\end{equation}

We notice that the inner and the outer front have different exponents. 
Namely, the latter one is close to $3/2$ as predicted in \refsec{sec:reversal}. 

Furthermore, the inner exponent is larger than the outer one, suggesting that two fronts are likely to collide at some point. 
This is consistent with the interpretation that the inner front and the outer front respectively mark the shortest and the longest survived logical operator initially supported on $A$. 
Recall that all the logical operators will be eventually destroyed (\textit{i.e.} $I_{A_R: P}\approx 0$) at a time scale $t\propto |A|^2$, and shortly before this time scale, there will be only one logical operator surviving. 
This logical operator is both the longest logical operator and the shortest logical operator, which means that two fronts of the lightcone collide. 
Numerically accessing this time scale, however, is challenging due to the requirement of running a large system for a long time. 

We currently do not have a physical explanation of the exponent for $d_\epsilon(t)$. We leave this as a future question.

\subsection{Partially fixing the input state}\label{sec:alt-way}

Until now, we have treated $C_{\textbf{m}}$ as a dynamical process whose input is the entire physical system $P$. 
In this subsection we discuss the consequence of using only a small subregion $A\subseteq P$ as the input while fixing the initial state for other qubits in $A^c=P\backslash A$. 
In the error-correcting code perspective, fixing part of the initial state effectively selects a subspace of the original code space.
Here, we restrict our discussion to the case that $\rho_{A^c}$ is a homogeneous product state: $\rho_{A^c}=\otimes_{i\in A^c} (\rho_0)_{i}$, where $\rho_0$ is some single qubit density matrix, see \reffig{fig:pure_initial_state}(a) for illustration. 
A similar setting was considered in~\cite{ippoliti2021entanglement} for studying information spreading in measurement-only circuits.

\begin{figure}[h]
\centering
\subfloat[]{\includegraphics[width=0.28\textwidth]{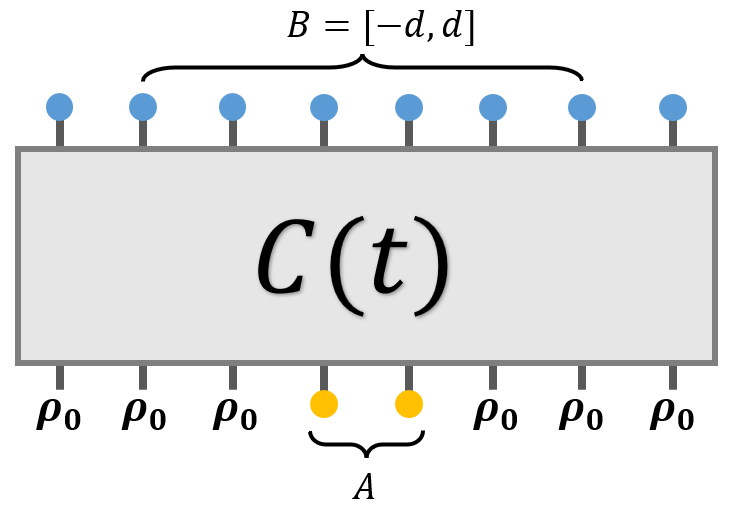}}\\
\subfloat[]{\includegraphics[width=0.35\textwidth]{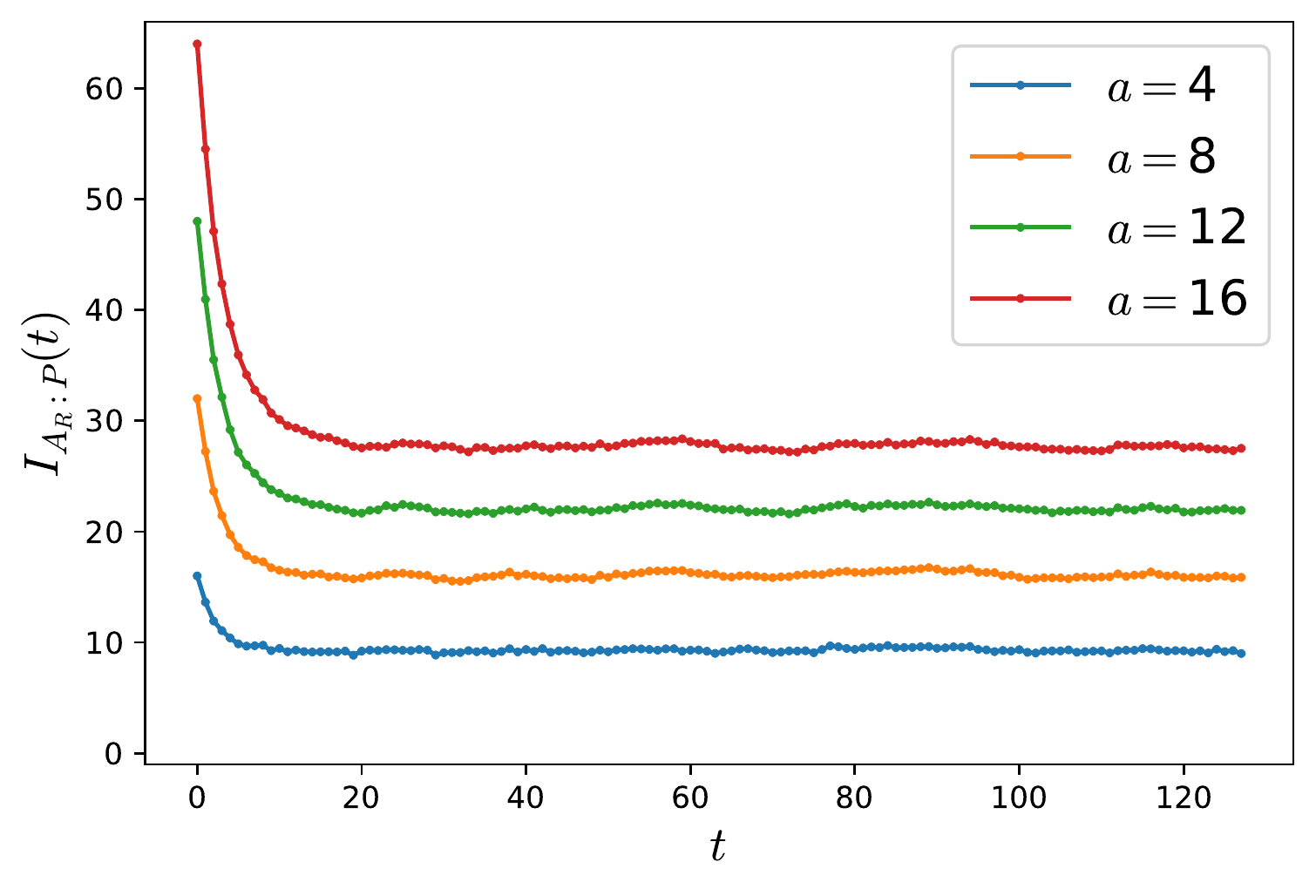}}\\
\subfloat[]{\includegraphics[width=0.40\textwidth]{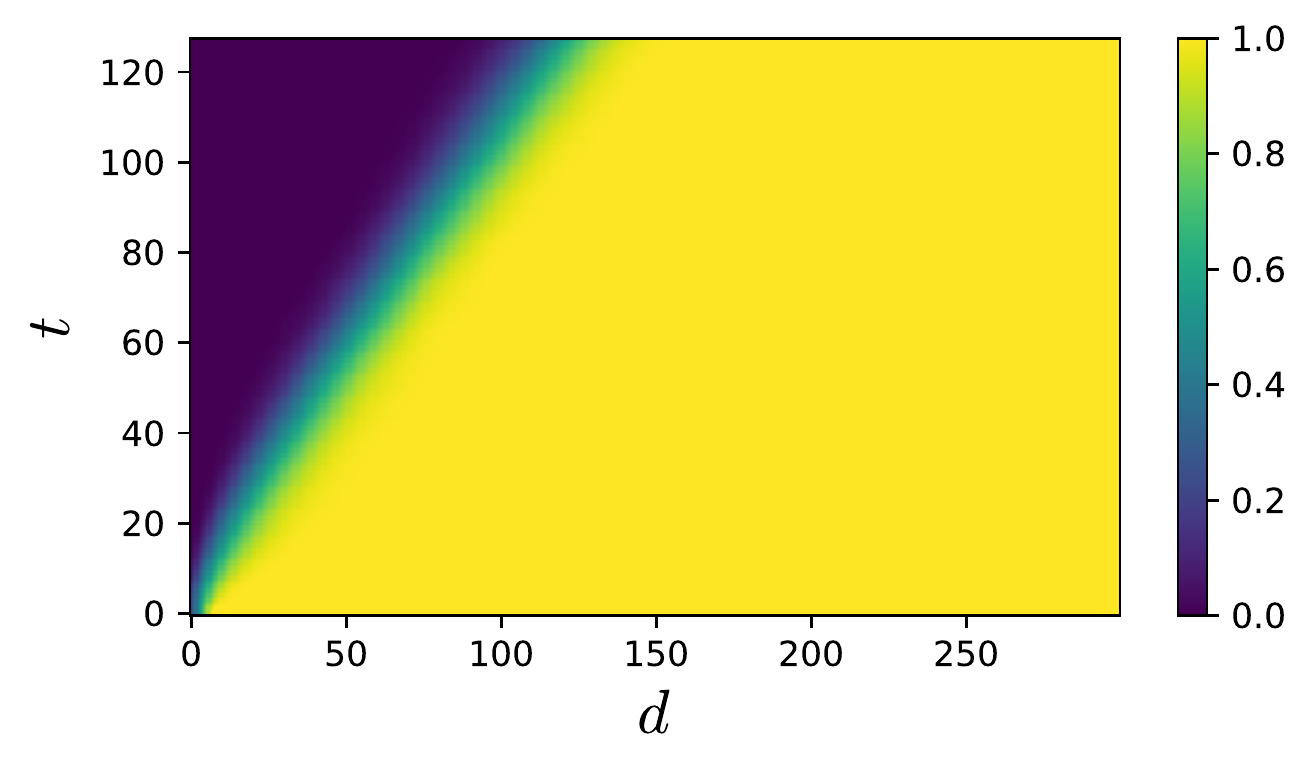}}
\caption{(a) The circuit configuration for the alternative setting to characterize information spreading discussed in \refsec{sec:alt-way}, where $A=\{3,4\}$ (b) Decay of $A=[-a,a]$'s initial information with time $t$. (c) Heatmap of the normalized mutual information $\mu(d,t)$ in the alternative setting}
\label{fig:pure_spreading}
\end{figure}

Let us begin by introducing a reference system $R$ which is maximally entangled with $A$ initially. 
Note that, in this case, $R=A_R=[-a,a]_R$ has the same size as $A$. 

We first discuss the case that $\rho_0=\frac{1}{2}\mathbb{I}$, \textit{i.e.} the input state on $A^c_P$ is initially maximally mixed. 
This enforces no constraint on the codespace, so it is equivalent to the previously studied case in \refsec{sec:wavefront}.

Next, we consider the case $\rho_0=\ket{0}\bra{0}$, \textit{i.e.} qubits in $A^c_P$ are initially prepared in a pure product state. 
We present the numerically simulated $I_{A_R:B_P}(t)$ in \reffig{fig:pure_spreading}(b-c). 
In panel (b), we observe that $A$'s input information equilibrates to some constant value instead of decaying to zero with time (at exponentially long time, we expect it to decay to zero). 
In addition, in panel (c) which displays the heatmap for the normalized mutual information $\mu(d,t)$, we find that the remaining information spreads only linearly, instead of super-linearly with time. 
This behavior of information spreading is similar to the one observed in measurement-only dynamics within its volume-law, as reported in \cite{ippoliti2021entanglement}.

\section{domain-wall Picture}\label{sec:domain-wall}

It has been proposed that several universal features of entanglement dynamics in weakly monitored circuits can be described effectively by modeling entanglement domain-walls using directed polymers in a random environment (DPRE)~\cite{li2021dpre}.
In this section, we revisit our results from the domain-wall picture and the DPRE effective theory.

\subsection{Review of domain-wall picture}\label{sec:review}

We begin by providing a brief review of the domain-wall picture and the DPRE effective theory. \add{We refer interested readers to \cite{jian2020measurement, bao2020theory, li2021statistical} for detailed discussions on how to map monitored circuits to statistical mechanics models as well as how the former's entanglement properties are related to domain-walls in the latter.}

Consider a $t$-layer circuit defined on an infinite qubit chain with the initial state being maximally entangled with the reference $R$ as in \reffig{fig:two_copy}. 
By stacking $Q$ copies of the $(1+1)$-dimensional circuit and performing disorder average over random unitary gates, one can map the whole system to a $(2+0)$-dimensional classical statistical mechanics model on the same space-time manifold. 
Through this mapping, each unitary gate is mapped to a spin in the bulk, while each qubit in $P$ and $R$ is mapped to a spin located at the boundaries. 
Here, spins $\{s\}$ take values in the permutation group $S_Q$. 
The weakly-monitored phase of the circuit corresponds to the ordered phase of the spin model. 

Entanglement properties of the original quantum circuit can be found by looking for the domain-walls in the statistical mechanics model.
In particular, the entanglement entropy $S_A(t)$ of a boundary region $A\in P\cup R$ is given by the free energy cost of a domain-wall that separates $A$ from the rest of the boundary $\bar{A}=P\cup R-A$:
\begin{equation}\label{eq:entropy_and_DW}
    S_A(t) \approx f\left(\eqfig{2cm}{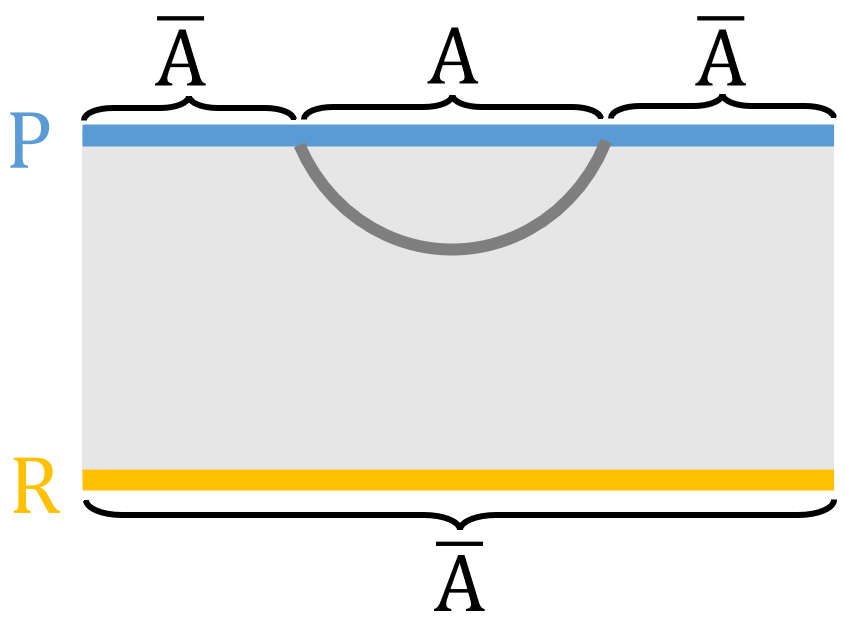}\right)
\end{equation}
where $f(\cdot)$ denotes the energy cost of the domain-wall(s) configuration (the dark grey curve). 

While the above figure considers the case where $A$ is a single contiguous region, the region $A$ can be any subset of $P\cup R$. 
\add{Generally}, there may be domain-wall configurations with different topologies that separate $A$ and $\bar{A}$. For example, if $A=A_1\cup A_2$ with $A_1\subseteq P$ and $A_2\subseteq R$, the domain-wall can have two different topologies:
\begin{equation*}
    \eqfig{1.8cm}{figs/fig_illu_A1A2}:\quad\eqfig{1.2cm}{figs/fig_AA1}\quad \eqfig{1.2cm}{figs/fig_AA2}
\end{equation*}
In this case the free energy, in the leading order, is given by the configuration that gives the smaller free energy:
\begin{equation}
    S_A(t)=\min\left\{f\left(\eqfig{0.7cm}{figs/fig_IAB1}\right),  f\left(\eqfig{0.7cm}{figs/fig_IAB2}\right)\right\}
\end{equation}
\add{Furthermore, if the domain-wall has several disconnected pieces (for instance in the case shown above), its total free energy is the sum of each individual piece's contribution.}

\add{Solving domain-wall's properties from the statistical mechanics model is in general not feasible. In \cite{li2021dpre}, the authors find numerical and analytical evidence suggesting that large-scale properties of the domain-wall are captured by DPRE. In particular,}
a single piece of domain-wall's energy cost is approximately proportional to the length of the shortest path connecting two endpoints in a random media:
\begin{equation}\label{eq:randpolymer}
    \begin{aligned}
        f\left(\eqfig{.75cm}{figs/fig_xtoy}\right) &\simeq \min_{\substack{\textbf{z}(\tau): \textbf{z}(0)=\textbf{x},\\ 
        \textbf{z}(1)=
        \textbf{y}}}\left(\int_0^1 V(\textbf{z}(\tau)) |\textbf{z}'(\tau)|\d \tau \right)\\
        &\equiv D(\textbf{x}, \textbf{y}),
    \end{aligned}
\end{equation}
where $V(\textbf{x})$ is a quenched random potential without spatial-temporal correlation $\text{cov}(V(\textbf{x}),V(\textbf{y}))=\sigma^2\delta_{\textbf{x},\textbf{y}}$ and with a positive mean $\E[V(\textbf{x})]=\mu>0$. The $\textbf{z}(\cdot)$ runs over all paths connecting $\textbf{x}$ to $\textbf{y}$. It should be noted that, while the leading term in $D(\textbf{x}_1, \textbf{x}_2)$ is proportional to $|\textbf{x}_1-\textbf{x}_2|$, it contains a subleading contribution which gives rise to non-trivial multipartite entanglement as we shall see later. 

\add{
Before finishing the review, we discuss an issue about the directedness of the shortest paths $\textbf{z}(\tau)$ appearing in \refeq{eq:randpolymer}. In the original proposal of the effective theory~\cite{li2021dpre}, it is further required that $\textbf{z}(\cdot)$ is only chosen from paths that are \textit{directed} (\textit{i.e.} not going backward) along the $\textbf{x}$-$\textbf{y}$ direction, as is suggested by the word `directed' in the name DPRE. In fact, this extra constraint does not have to be added by hand. This is because for a typical realization of $V$, the not-necessarily-directed shortest path connecting $\textbf{x}$ and $\textbf{y}$ has only a $O(1)$ length overhang along the $\textbf{x}$-$\textbf{y}$ direction \cite{huse1985pinning}, which is negligible after coarse-graining. In other words, directed shortest paths and un-directed shortest paths connecting $\textbf{x}$ and $\textbf{y}$ belong to the same universality class, and are indistinguishable in the thermodynamic limit. Thus, depending on our purpose, we can impose or lift the directedness at our convenience at the lattice level without worrying about getting different conclusions in the thermodynamic limit.
}

\subsection{Domain-wall interpretation of $S_A$}

Henceforth, for each quantity $O$ defined in the context of monitored circuits, we use $[O]$ to denote its counterpart in the DPRE effective theory.

We start from the simplest case: entanglement entropy of a single contiguous interval $A$. Using the domain-wall description \refeq{eq:entropy_and_DW}, $S_A$ can be written as follows:
\begin{equation}
        \eqfig{1.3cm}{figs/fig_illu_S1}:\ \ S_A(t) = f\left(\eqfig{.8cm}{figs/fig_S1}\right),
\end{equation}
where
\begin{equation}\label{eq:cor1}
\textbf{x}_1 = (0, t),\quad \textbf{x}_2 = (|A|, t)
\end{equation}
and 
\begin{equation}\label{eq:S_poly}
    [S_A(t)] \simeq D(\textbf{x}_1, \textbf{x}_2).
\end{equation}
Numerical and theoretical studies in \cite{li2021dpre} found that the mean of $D$ takes the following scaling form:
\begin{equation}
    \overline{D(\textbf{x}_1, \textbf{x}_2)} = s_0(t)|A| + |A|^{\frac{1}{3}}\Phi(t|A|^{-\frac{2}{3}}),
\end{equation}
where $\Phi(\cdot)$ in the subleading term is a universal scaling function with the following limiting behavior:
\begin{equation}
    \Phi(\eta)=
    \left\{
    \begin{array}{ll}
        \eta^{1/2} &\eta\rightarrow 0  \\
         \eta^{0} &\eta\rightarrow \infty
    \end{array}
    \right.
\end{equation}
This implies that the subleading part of $S_A(t)$ behaves as:
\begin{equation}
    S^{\mathsf{sub}}_A(t) \simeq
    \left\{
    \begin{array}{ll}
        t^{\frac{1}{2}} &t \ll |A|^{\frac{2}{3}} \\
        |A|^{\frac{1}{3}} &t \gg |A|^{\frac{2}{3}}
    \end{array}
    \right.
\end{equation}
This agrees with our results derived from the stabilizer length distribution in Sec. \ref{sec:S_cont}. Furthermore, since the conditional mutual information \refeq{eq:cmiscaling} and the growth of contiguous code distance \refeq{eq:dcode1} rely only on the dynamics of $S_A(t)$ in single intervals, we conclude these are also consistent with the DPRE effective theory.

The domain-wall picture also provides a useful insight into the $S_A$'s different behaviors for pure and for completely mixed initial states. In the pure initial state case, the underlying statistical mechanics model assumes free boundary condition on the $t = 0$ boundary \cite{jian2020measurement, li2021conformal, bao2020theory}, and domain-walls can freely start or end on the $t=0$ boundary. We thus have
\begin{equation}
    S_{A}(t) \approx \min\left\{f\left(\eqfig{0.7cm}{figs/fig_S1_pure2}\right), f\left(\eqfig{0.7cm}{figs/fig_S1_pure1}\right)\right\}.
\end{equation}
Using properties of the $D(\cdot)$ function, we notice that in the leading order the first term is $O(|A|)$, while the second term is $O(t)$. This suggests that, when $t$ is sublinear in $|A|$, the second term is preferred over the first one. 
This changes at a time scale $t^* \simeq |A|$, after which the first term becomes smaller. Furthermore, we notice that the time scale $t^*$ is far later than the other time scale $\simeq |A|^{2/3}$ at which the first term would reach the equilibrium. We thus conclude that $S_A(t)$ hardly changes after $t^*$. In summary, we have the following behavior of $S_A(t)$:
\begin{equation}
        S_A(t) \simeq 
    \left\{
    \begin{array}{lll}
         t & &t<t^*=O(|A|) \\
         |A| & &t>t^*
    \end{array}
    \right.
\end{equation}
which agrees with the results we obtained in \refsec{sec:pure_vs_mix}.

\subsection{Domain-wall interpretation of $I_{A:B}$}\label{sec:dwIAB}

Next, we turn our attention to the mutual informaiton $I_{A:B}$ and present a derivation of \refeq{eq:mutualinfoapp_0} using the domain-wall picture, which is reprinted below
\begin{equation}
I_{A:B}\approx\max\{S_{A}+S_{B}-S_{ABC}-S_C, 0\}.
\end{equation}
Here, $A$, $B$ and $C$ are arranged as follows:
\begin{equation*}
    \eqfig{2cm}{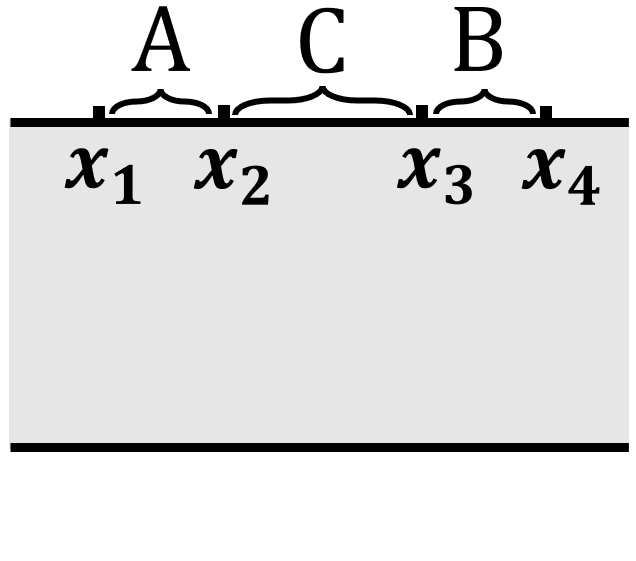}
\end{equation*}

Using the relation \refeq{eq:entropy_and_DW}, the domain-wall representations for the three terms in $I_{A,B}$ are:
\begin{equation}\label{eq:dmIAB1}
    \begin{aligned}
        S_A&= f\left(\eqfig{.8cm}{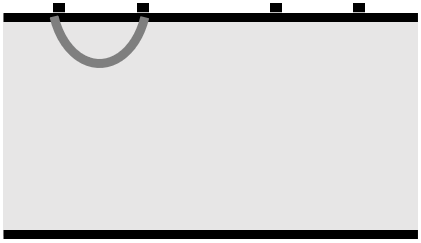}\right),\\
        S_B&= f\left(\eqfig{.8cm}{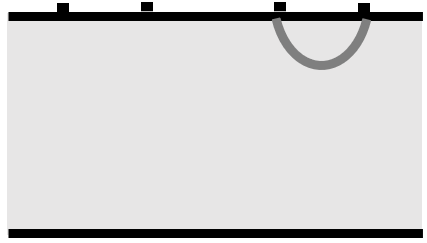}\right),\\
        S_{AB}&= \min\left\{f\left(\eqfig{.8cm}{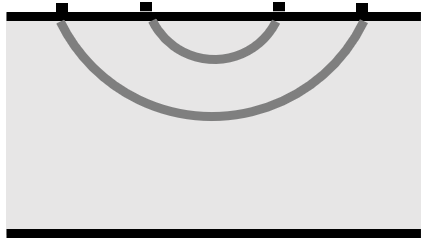}\right), f\left(\eqfig{.8cm}{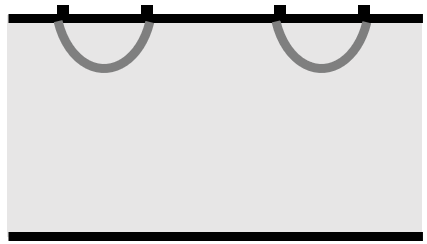}\right)\right\}\\
        &=\min\{S_{ABC}+S_C, S_A+S_B\}.
    \end{aligned}
\end{equation}
Combining them together, we again obtain \refeq{eq:mutualinfoapp_0}.

Here we would like to point out a certain subtlety, due to the subleading contributions, in evaluating the minimum of domain-wall's free energies. Recalling that the leading order contribution of a domain-wall's free energy is proportional to the distance between two endpoints, one might think that the second domain-wall configuration for $S_{AB}$ in \refeq{eq:dmIAB1} would be always smaller. Notice, however, that the difference in the leading order contributions is $O(|C|)$, and if $|C|$ is comparable to the subleading terms in $S_A$ or $S_B$, the subleading contributions could make the first domain-wall configuration smaller. This is indeed what we observed in the calculations of the code distance in \refsec{sec:coded}.

\subsection{Domain-wall interpretation of $I_{A_R:B_P}$}

Finally, we evaluate $I_{A_R:B_P}$, defined in \refeq{eq:mi}, which was used as a measure of information spreading in \refsec{sec:info-spread}. Recall that the intervals $A$ and $B$ have the following configuration:
\begin{equation*}
        \eqfig{1.6cm}{figs/fig_illu_AB},
\end{equation*}
Using the domain-wall picture, we can obtain the following relations:
\begin{equation}\label{eq:dw-mi}
\begin{aligned}
    S_{A_R} &= f\left(\eqfig{.6cm}{figs/fig_S_down}\right)\\
    S_{B_P} &= f\left(\eqfig{.6cm}{figs/fig_S_up}\right)\\
    S_{A_R\cup B_P} &= \min\left\{f\left(\eqfig{.6cm}{figs/fig_IAB2}\right), f\left(\eqfig{.6cm}{figs/fig_IAB1}\right)\right\}\\
    \Rightarrow I_{A_R:B_P} &\approx \max\left\{0, f\left(\eqfig{.6cm}{figs/fig_IAB2}\right)-f\left(\eqfig{.6cm}{figs/fig_IAB1}\right)\right\}.
\end{aligned}
\end{equation}
In the last line we make use of the additive property of the domain-walls' energies:
\begin{equation}
    f\left(\eqfig{.6cm}{figs/fig_S_down}\right)+f\left(\eqfig{.6cm}{figs/fig_S_up}\right)\approx f\left(\eqfig{.6cm}{figs/fig_IAB2}\right) .
\end{equation}
The four points involved are chosen as
\begin{equation}
    \begin{aligned}
        &\textbf{x}_1=(-d/2, t),
        &\textbf{x}_2=(d/2, t),\\
        &\textbf{x}_3=(-a/2, 0),
        &\textbf{x}_4=(a/2, 0).
    \end{aligned}
\end{equation}
and we have
\begin{equation}\label{eq:IAB_DW}
\begin{aligned}
    [I_{A_R:B_P}(t)] = \max\{&0, D(\textbf{x}_1, \textbf{x}_2) + D(\textbf{x}_3, \textbf{x}_4) \\
    &- D(\textbf{x}_1, \textbf{x}_3) - D(\textbf{x}_2, \textbf{x}_4)\}.
\end{aligned}
\end{equation}

Since the analytical expressions for $D(\textbf{x}_1, \textbf{x}_3)$ and $D(\textbf{x}_2, \textbf{x}_4)$ are not known, we resort to a direct simulation of \refeq{eq:IAB_DW}. Details about the shortest paths simulation can be found in \refsec{ap:dpre-sim}. 

As shown in \reffig{fig:randpolymer}(a), the behavior of the simulated $[\mu_{A_R: B_P}(t)]=[I_{A_R: B_P}(t)]/[I_{A_R:
P}(t)]$ is qualitatively similar to that of $\mu_{A_R: B_P}$ shown in \reffig{fig:spreading}(b). Furthermore, we calculate $[d_\epsilon(t)]$ and $[D_\epsilon(t)]$ and find that they also develop power-law growth in late times (\reffig{fig:randpolymer}(b), black dashed lines are power-law fittings). The exponents are fitted to be:
\begin{equation}\label{eq:exponents2}
        d_\epsilon(t) \propto t^{1.92},
        \quad
        D_{\epsilon}(t) \propto t^{1.53}.
\end{equation}
The exponents numerically agree with those obtained in \refeq{eq:exponents}, reflecting the consistency between our results and the DPRE effective theory.
\begin{figure}[h]
    \centering
    \subfloat[]{\includegraphics[width=0.40\textwidth]{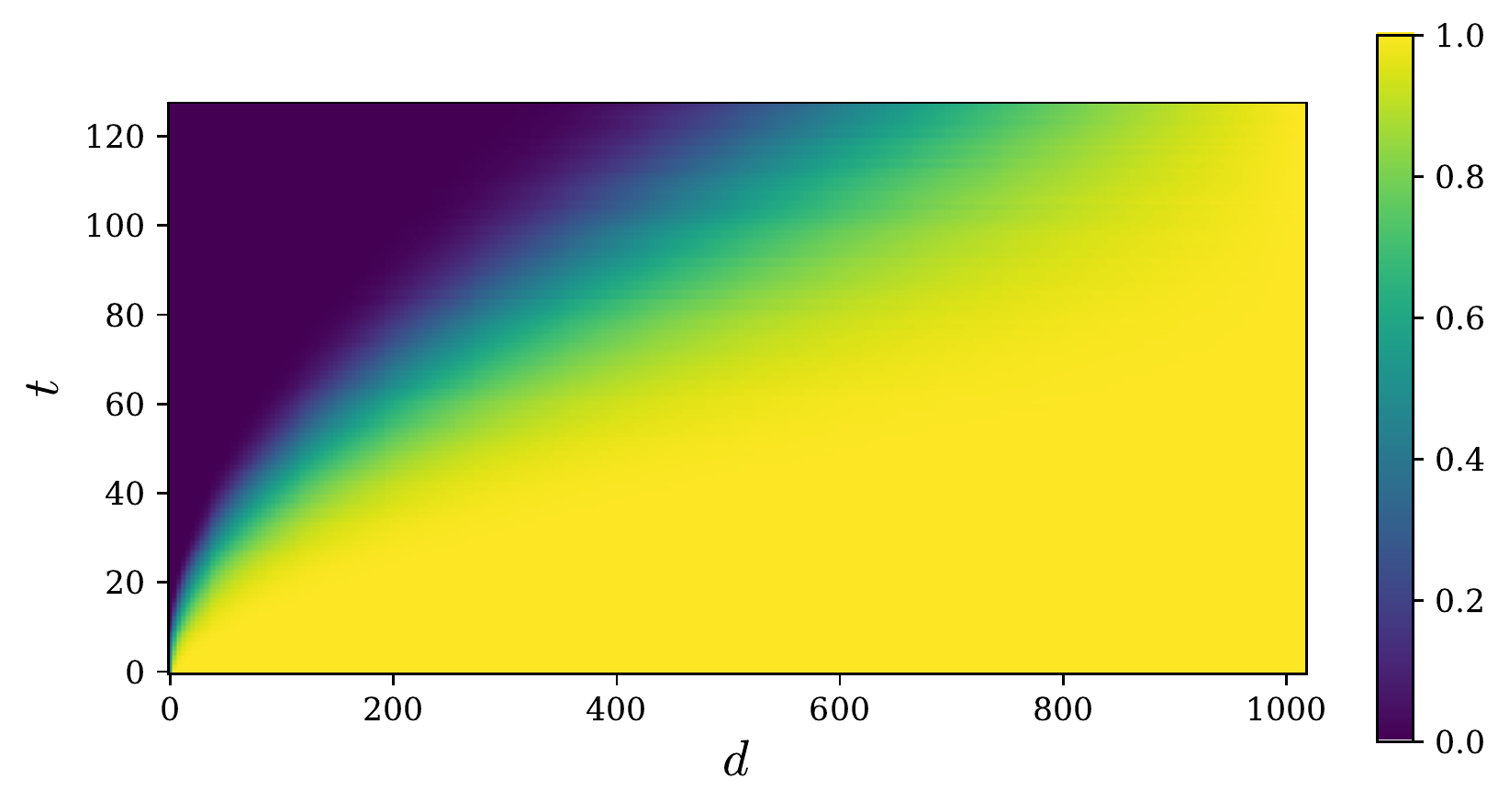}}\\
    \subfloat[]{\includegraphics[width=0.38\textwidth]{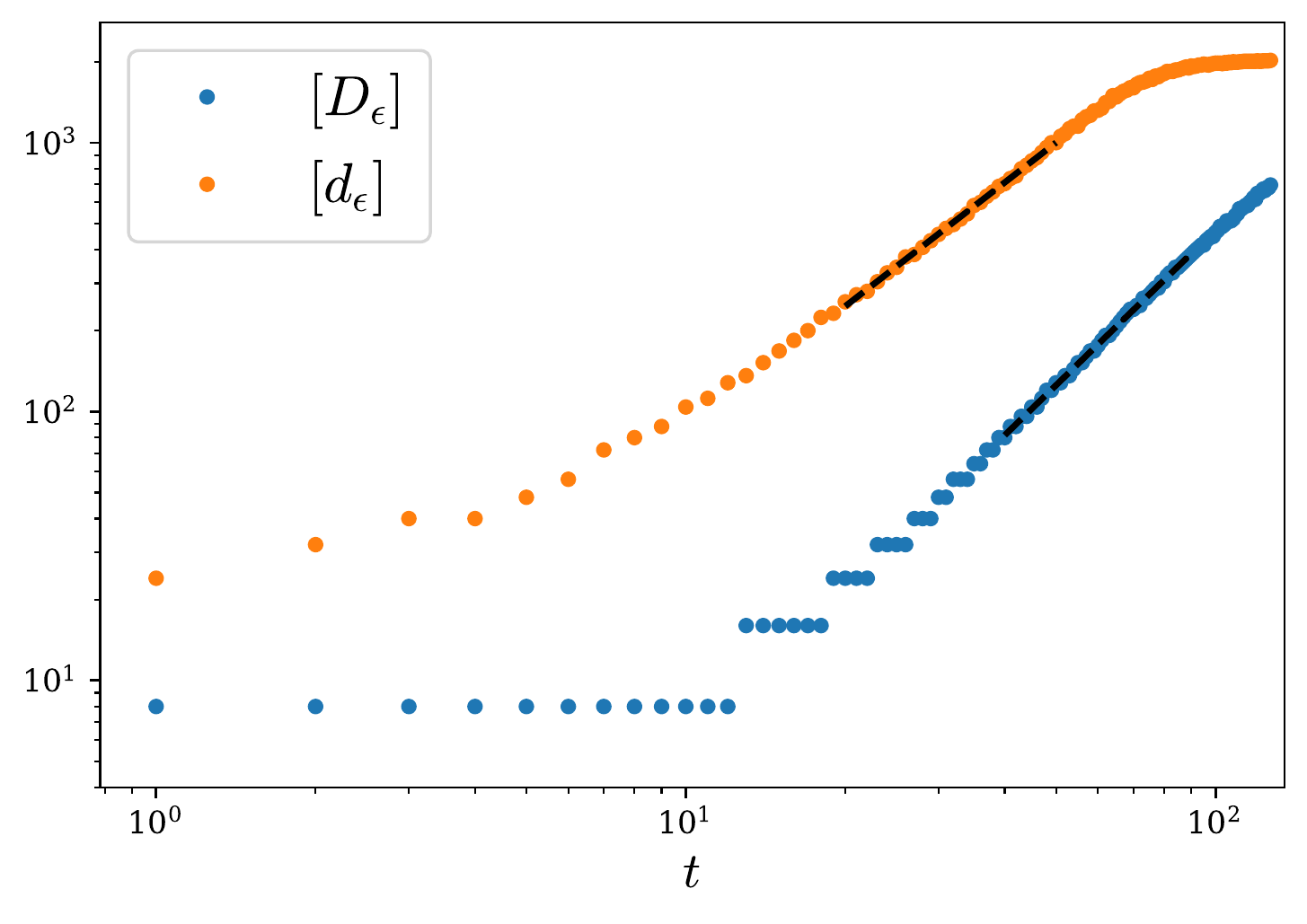}}
    \caption{Numerical result for the shortest path simulation. (a) Heatmap of the simulated $[\mu_{A_R, B_P}(t)]$. The figure should be compared with its counterpart in the Clifford simulation \reffig{fig:spreading}(b). (b) $[d_\epsilon(t)]$ and $[D_\epsilon(t)]$ extracted from the simulation data, with $\epsilon=0.01$. The black dashed lines are power-law fittings, with fitted exponents shown in \refeq{eq:exponents2}}
    \label{fig:randpolymer}
\end{figure}

\section{Discussion}\label{sec:discussion}

In this work, we studied the quantum information dynamics in weakly monitored Clifford quantum circuits and found that entanglement and information can spread superlinearly with time ($\simeq t^{\frac{3}{2}}$). 
Namely, we observed that such superballistic propagation of information is mediated by superlinear growth of the size of stabilizer generators due to projective measurements.  
Furthermore, this led to a new sublinear time scale of $\simeq L^{\frac{2}{3}}$ which can be interpreted as the encoding time of a dynamical quantum error-correcting code. 
It is important to emphasize again that these nonlocal effects emerge in the non-relativistic limit where each local observer instantly learns the measurement outcomes. 

While we focused on a particular setup of a $(1+1)$-dimensional random weakly-monitored Clifford circuit with $p=0.08$, we expect that our results reveal universal aspects of monitored quantum many-body dynamics.
Indeed, some of our technical results suggest further generalization. For example, the stabilizer length distributions can be replaced with the conditional mutual information \refeq{eq:clipCMI} when studying monitored circuits beyond Clifford gates. 

One important future question is to reveal the microscopic origin of the superlinear entanglement growth. In this work, we have taken a few first steps toward this goal. Namely, we identified the scaling form of the time evolution of the stabilizer length distributions $h(\ell,t)$ for both pure and mixed initial states. A key next step will be to understand the underlying microscopic mechanism for the superlinear growth of stabilizer generators and the effect of projective measurements. Also, we have presented a derivation of the mutual information for two disjoint intervals by utilizing a probabilistic argument on the operator contents of stabilizer generators. Our analysis matched with the macroscopic prediction from DPREs, including the subleading contributions. This hints that probabilistic arguments on operator contents may provide further insights into the microscopic origin of the effective theory of monitored quantum circuits. 

The notion of information spreading has played a key role in investigating the bulk and boundary dynamics in the AdS/CFT correspondence. This naturally prompts us to ponder over a possible gravitational dual of monitored quantum circuits and the geometric manifestation of the superlinear entanglement growth. In the bulk quantum gravity, it has been conjectured that some class of projection operations can be modeled as insertions of End-of-World (EoW) branes~\cite{almheiri2018holographic, takayanagi2011holographic}. In the boundary quantum many-body physics, recent studies suggest that projective measurements of a few qubits can induce significant changes of the entanglement structure due to the underlying scrambling dynamics~\cite{yoshida2022projective}. The effect of measurements on a conformal field theory has been considered in \cite{Rajabpour_2016, MIE, altman_critical}, and the holographic bulk interpretation analyzed in \cite{swingle1, swingle2}. It will be interesting to further look into the geometric interpretation of continuous monitoring in quantum gravity and the resulting entanglement dynamics.

\begin{acknowledgements}
We thank Yaodong Li for helpful discussions, as well as Roger Melko for computational resources. This work
was supported by Perimeter Institute, NSERC, and Compute Canada. Research at Perimeter Institute is supported in part by the Government of Canada through the Department of Innovation, Science and Economic Development and by the Province of Ontario through the
Ministry of Colleges and Universities.

\end{acknowledgements}

\bibliography{main.bib}

\appendix

\section{Monitored dynamics as a quantum channel}\label{sec:mixchannel}

In the main part of the paper, we focused on the trajectory dynamics $C_{\textbf{m}}$ corresponding to a single measurement outcome sequence $\textbf{m}$. In this appendix, we study the ensemble of all the trajectory dynamics $\{C_{\textbf{m}}\}$ that can arise in a monitored circuit with given unitary gates and measurement locations.

To study the ensemble of dynamics $\{C_{\textbf{m}}\}$, we introduce a set of registers $M$ to record all the measurement outcomes. The whole process can be viewed as a Clifford quantum channel from $P$ to $P\cup M$ ~\cite{gullans2020dynamical, choiQEC}, written as:
\begin{equation}\label{eq:monitored_channel}
    \channel_C(\rho_P) = \sum_{\textbf{m}\in\{0,1\}^{|M|}} \ket{\textbf{m}}\bra{\textbf{m}}_M \otimes (C_{\textbf{m}}\rho C_{\textbf{m}}^\dagger)_P.
\end{equation}
Here we have made each measurement outcome decoherent by applying a complete dephasing channel on registers.

Given a Pauli operator $O_{A}$ defined on $A\subset P$, we are naturally led to look for a Pauli operator $\tilde{O}_S$ defined on $S\subset Q$ such that:
\begin{equation}\label{eq:recovery}
    \tr(O_A\rho) = \tr(\tilde{O}_S\channel_C(\rho)) \quad \forall \rho.
\end{equation}
Accordingly, we use the following quantity as a measure for the amount of information transferred from $A$ into $S$ by the stabilizer channel $\channel$:
\begin{equation}\label{eq:relation1}
    H_{A\rightarrow S}(\channel)
    \equiv 
    |\{O\in \pauli_A:\exists~\tilde{O}_S\in\pauli_S\ s.t.\ O = \channel^\dagger(\tilde{O}_S)\}|,
\end{equation}
where $\pauli_{A}$ ($\pauli_{S}$) denotes the set of Pauli operators supported on $A$ ($S$). 

The central result of this appendix is a relation between the entanglement properties of a single trajectory dynamics and that of the ensemble of dynamics.

\begin{theorem} \label{thm:channelinfo}
Letting $\channel_C$ be a Clifford monitored dynamics and $A,B\subseteq P$, we have: 
\begin{equation}\label{eq:monitored_relation}
         I_{A_R:B_P}(\ket{\phi_{\textbf{m}}})
         = \log_2 H_{A\rightarrow B\cup M}(\channel_C) - \log_2 H_{A\rightarrow M}(\channel_C).
\end{equation}
\end{theorem}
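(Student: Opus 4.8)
\emph{Proof proposal.} The plan is to repackage the channel $\channel_C$ into its Choi state and reduce the claimed identity to a statement about conditional mutual information of that state. First I would introduce the Choi state $\Omega$ of $\channel_C$ on $R\cup P\cup M$, i.e. the result of acting with $\channel_C\otimes\mathrm{id}_R$ on $L$ EPR pairs between $P$ and $R$. Because $\channel_C$ is a Clifford channel, $\Omega$ is a stabilizer state, and unpacking \refeq{eq:monitored_channel} gives the classical--quantum form $\Omega = \sum_{\textbf m} p_{\textbf m}\,\ket{\textbf m}\bra{\textbf m}_M\otimes\ket{\phi_{\textbf m}}\bra{\phi_{\textbf m}}_{RP}$, where $\ket{\phi_{\textbf m}}$ is exactly the trajectory state \refeq{eq:vectorize} and $p_{\textbf m}$ the corresponding Born weight. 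I would also record the elementary fact that $\Omega_R$ is maximally mixed (since $\mathrm{Tr}_M\circ\channel_C$ is trace preserving), so that $\Omega$ carries no nontrivial stabilizer supported entirely inside $A_R$.

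The main lemma to establish is that, for any region $S\subseteq P\cup M$,
\begin{equation}
\log_2 H_{A\rightarrow S}(\channel_C) = I_{A_R: S}(\Omega).
\end{equation}
To prove it I would use channel--state duality to rephrase the recovery condition \refeq{eq:recovery}: a Pauli $O_A\in\pauli_A$ admits a representative $\tilde O_S\in\pauli_S$ with $O_A=\channel_C^\dagger(\tilde O_S)$ if and only if $O_{A_R}^{T}\otimes\tilde O_S$ is a stabilizer of $\Omega$ (extended by the identity outside $A_R\cup S$). The group of pairs $(O_A,\tilde O_S)$ satisfying this is precisely the subgroup $G_{A_R\cup S}(\Omega)$ of $\Omega$'s stabilizers supported on $A_R\cup S$, and two representatives $\tilde O_S,\tilde O_S'$ give the same $O_A$ exactly when $\tilde O_S(\tilde O_S')^{-1}\in G_S(\Omega)$; hence $H_{A\rightarrow S}(\channel_C)=|G_{A_R\cup S}(\Omega)|/|G_S(\Omega)|$. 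Converting group sizes into entropies via the stabilizer formula \refeq{eq:clipS} (i.e. $\log_2|G_X(\Omega)|=|X|-S_X(\Omega)$) and using $\log_2|G_{A_R}(\Omega)|=0$ from the previous paragraph collapses the right-hand side to $S_{A_R}+S_S-S_{A_R\cup S}=I_{A_R: S}(\Omega)$. Getting this duality dictionary and the group bookkeeping exactly right---in particular the way the maximally mixed $R$ removes the residual term---is the step I expect to be the main obstacle; the rest is routine.

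Granting the lemma, I would finish as follows. Applying it to $S=B\cup M$ and to $S=M$ and subtracting gives
\begin{equation}
\log_2 H_{A\rightarrow B\cup M}(\channel_C)-\log_2 H_{A\rightarrow M}(\channel_C) = I_{A_R: B_P\cup M}(\Omega)-I_{A_R: M}(\Omega)=I_{A_R: B_P| M}(\Omega).
\end{equation}
Since $M$ is a classical register in $\Omega$, the conditional mutual information is the outcome-average $I_{A_R: B_P| M}(\Omega)=\sum_{\textbf m}p_{\textbf m}\,I_{A_R: B_P}(\ket{\phi_{\textbf m}})$. Finally, in a Clifford circuit the stabilizer group of $\ket{\phi_{\textbf m}}$ depends on the measurement record only through signs, so every entanglement entropy---hence $I_{A_R: B_P}(\ket{\phi_{\textbf m}})$---is independent of $\textbf m$, and the average equals the single-trajectory value, which is \refeq{eq:monitored_relation}. (Alternatively one can bypass the Choi state and argue directly from Theorem~\ref{thm:transferable}, identifying the Paulis on $A$ recoverable from $B\cup M$ modulo those recoverable from $M$ alone with logical operators of $\rho_{\mathrm{in}}$ on $A$ that are transferred to $B$; but the Choi-state route keeps the counting cleanest.)
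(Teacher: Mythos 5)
Your proposal is correct and follows essentially the same route as the paper's own proof: the Choi state of the channel, the duality lemma identifying recoverable Pauli pairs $(O_A,\tilde O_S)$ with stabilizers of the Choi state supported on $A_R\cup S$, the coset count $H_{A\to S}=|G_{A_R\cup S}|/|G_S|$ converted to $I_{A_R:S}$ via the stabilizer entropy formula and the maximal mixedness of $R$, and finally the conditioning on the classical register $M$ together with the $\textbf{m}$-independence of Clifford trajectory entropies. No substantive differences.
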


On the \textit{r.h.s} of the equation, the first (second) term measures how much information is transmitted from $A$ to $B\cup M$ (to $B$). Therefore,  $I_{A_R:B_P}(t)$ quantifies  the amount of \textit{extra} information within the input $A$ extractable within the output $B$, given all the classical measurement outcomes $M$ are known. Below, we present a proof of Thm.\ref{thm:channelinfo}. 

It is convenient to introduce the Choi state:
\begin{equation}\label{eq:ChoiPhi}
    \begin{aligned}
        \Phi_\channel
        =&\frac{1}{\dim P}
        \sum_{i, j}\channel_C(\ket{i}\bra{j}_P)\otimes \ket{i}\bra{j}_R\\
        =&
        \sum_{\mathbf{m}\in\{0,1\}^{|M|}}p_{\mathbf{m}}(\ket{\mathbf{m}}\bra{\mathbf{m}})_M\otimes (\ket{\phi_{\mathbf{m}}}\bra{\phi_{\mathbf{m}}})_{PR}.
    \end{aligned}
\end{equation}

Stabilizer generators of the Choi state $\Phi_\channel$ can be related to the recoverability of operators via the following lemma.

\begin{lemma}\label{lemma:Phistab}
Given a stabilizer channel $\channel:\opspace(\hilbert_P)\rightarrow\opspace(\hilbert_Q)$, $O\in\pauli_P$ is recovered by $\tilde{O}\in\pauli_Q$ through $\channel$ if and only if $O^T\otimes \tilde{O}$ is a stabilizer of $\Phi_\channel$.
\end{lemma}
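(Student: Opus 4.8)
The plan is to reduce the statement about recoverability through the channel $\channel$ to a statement about stabilizers of the Choi state $\Phi_\channel$, using the Choi--Jamio\l{}kowski correspondence together with the defining property of stabilizer channels. First I would recall that $\channel_C(\rho_P)=\sum_{\mathbf m}\ket{\mathbf m}\bra{\mathbf m}_M\otimes(C_{\mathbf m}\rho C_{\mathbf m}^\dagger)_P$ is a trace-preserving Clifford channel, so the Choi state $\Phi_\channel$ of \refeq{eq:ChoiPhi} is a genuine stabilizer state on $QR$ (with $Q=P\cup M$). The key elementary identity is the standard channel--Choi dictionary for expectation values: for any operators $X$ on $\hilbert_Q$ and $Y$ on $\hilbert_P$,
\begin{equation}
    \tr\!\big(X\,\channel(Y^T)\big) = \dim P \cdot \tr\!\big((X\otimes Y_R)\,\Phi_\channel\big),
\end{equation}
which follows by plugging in $\channel(\ket{i}\bra{j})$ and contracting indices. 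Equivalently, writing $\channel^\dagger$ for the Heisenberg-picture adjoint, $\tr\big((\channel^\dagger(X))\,Y^T\big)=\dim P\cdot\tr\big((X\otimes Y_R)\Phi_\channel\big)$.

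Next I would translate the recoverability condition $O=\channel^\dagger(\tilde O)$ into Choi language. From the identity above, $O=\channel^\dagger(\tilde O)$ holds if and only if $\tr\big((\tilde O\otimes Y_R)\Phi_\channel\big)=\tfrac{1}{\dim P}\tr(O\,Y^T)$ for every $Y$. Choosing $Y$ to range over a Pauli basis and using $\tr(P P')=\dim P\cdot\delta_{P,P'}$, the right-hand side picks out exactly the component of $O^T$ in the Pauli expansion, so this condition is equivalent to: the projection of the operator $\tilde O\otimes \mathbb{I}_R - \text{(rescaled)}$ has the prescribed overlaps. More cleanly: $O=\channel^\dagger(\tilde O)$ $\iff$ $\tr\big((O^T_R\otimes \tilde O)\,\Phi_\channel\big)=1$ and $\tr\big((P_R\otimes \tilde O)\Phi_\channel\big)=0$ for all Paulis $P\neq O^T$ on $R$. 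Since $O^T_R\otimes\tilde O$ is a Pauli operator on $QR$ and $\Phi_\channel$ is a stabilizer state, $\tr\big((O^T_R\otimes\tilde O)\Phi_\channel\big)$ is either $0$ or $\pm1$; I would then argue that the value $+1$ on this Pauli forces the vanishing of the off-diagonal overlaps automatically, because if $O^T_R\otimes\tilde O$ is a stabilizer then for any other Pauli $P_R\otimes\tilde O$ the product $(P_R O^T_R)\otimes\mathbb{I}_Q$ is a nontrivial Pauli acting only on $R$; it either anticommutes with a stabilizer (giving overlap $0$) or is itself a stabilizer, and the latter is impossible because the marginal $\Phi_\channel|_R$ is maximally mixed (as $\channel$ is trace-preserving, the reduced state on $R$ is $\mathbb{I}/\dim P$). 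Hence $O^T\otimes\tilde O$ being a stabilizer of $\Phi_\channel$ is exactly equivalent to $O=\channel^\dagger(\tilde O)$.

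I expect the main obstacle to be the bookkeeping around normalizations and the sign/phase of the Pauli operators: one must be careful that $O^T\otimes\tilde O$ stabilizes $\Phi_\channel$ with eigenvalue $+1$ (not $-1$), and that the transpose is taken consistently on the reference factor matching the convention in \refeq{eq:ChoiPhi} and \refeq{eq:vectorize}. The conceptual content — that a stabilizer of a Choi state is precisely a ``conserved Pauli'' of the channel — is standard, so the real work is making the dictionary airtight, in particular handling the case where $\tilde O$ acts partly on the measurement registers $M$ (the argument above goes through verbatim since nothing used $Q=P$). Once the equivalence is established for a single Pauli pair, the counting statement in Thm.~\ref{thm:channelinfo} follows by applying \refeq{eq:clipS} to count independent stabilizers of $\Phi_\channel$ supported on the relevant subsystems.
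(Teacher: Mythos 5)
Your proposal is correct and follows essentially the same route as the paper: both reduce the recoverability condition $O=\channel^\dagger(\tilde O)$ to the single trace condition $\tr\bigl(\Phi_\channel\, O^T\otimes\tilde O\bigr)=1$ via the channel--Choi dictionary, which for a stabilizer state is exactly membership of $O^T\otimes\tilde O$ in the stabilizer group. The only (minor, equally valid) difference is in how the operator equation is collapsed to that one trace condition: the paper uses that $\channel^\dagger$ sends Paulis to (multiples of) Paulis, so $\channel^\dagger(\tilde O)O=\identity_P$ iff its trace is $\dim P$, whereas you match all Pauli-basis coefficients and then show the off-diagonal overlaps vanish automatically from the maximally mixed marginal on $R$.
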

\begin{proof}
Starting from the definition of recoverability, we have
\begin{equation}\label{eq-lemma1proof}
\begin{aligned}
    &O = \channel^\dagger(\tilde{O})\\
    \Leftrightarrow\ 
    &\channel^\dagger(\tilde{O})O=\identity_P\\
    \Leftrightarrow\  
    &\tr(\channel^\dagger(\tilde{O})O)=\dim P\\
    \Leftrightarrow\  
    &\tr(\Phi_\channel O^T\otimes \tilde{O})=1.
\end{aligned}
\end{equation}
 The second last equivalence follows from the fact that $\identity$ is the only Pauli operator with non-zero trace and the fact that $\channel^\dagger$ maps Pauli operators to Pauli operators. The last equivalence is due to 
 \begin{equation}
 \begin{aligned}
    \tr(\channel^\dagger(\tilde{O})O)=&\tr(\tilde{O}\channel (O))\\
    =&\eqfig{1cm}{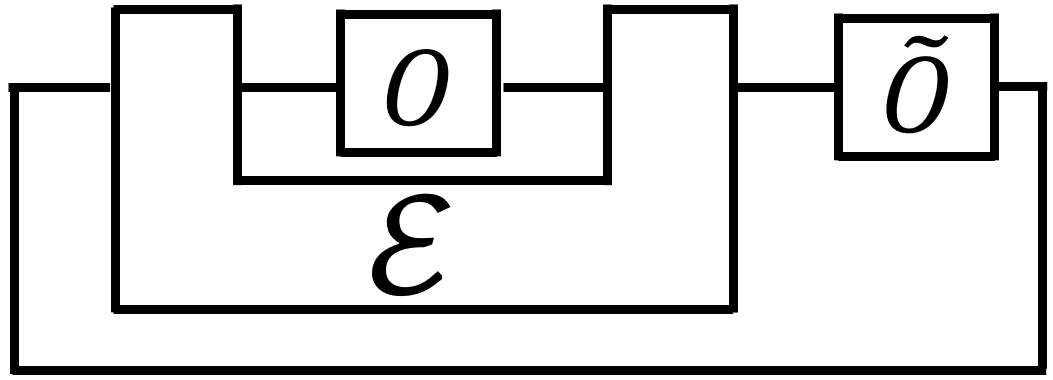}\\
    =&\eqfig{1.3cm}{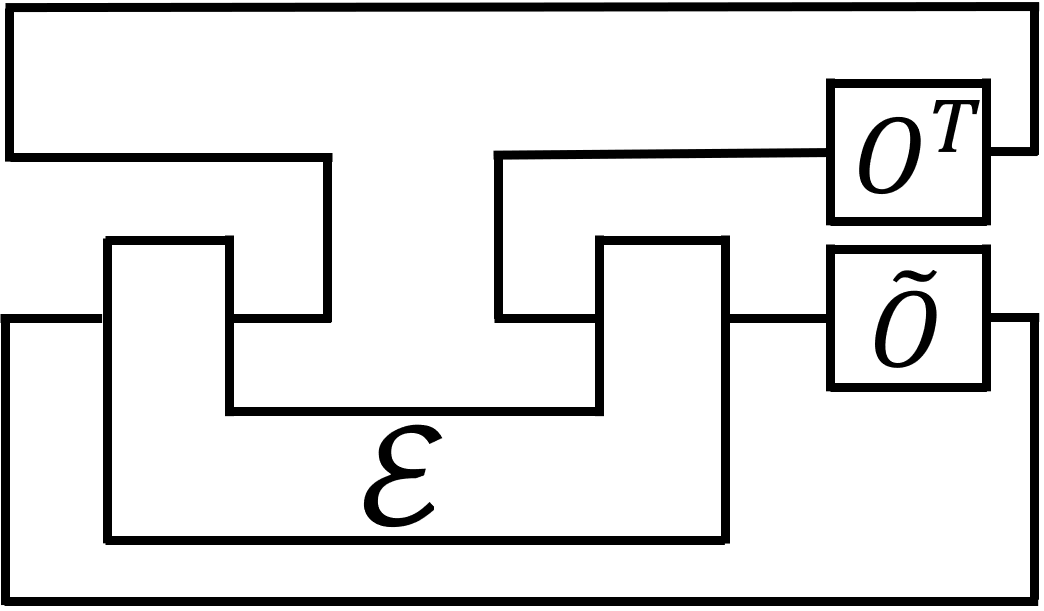}\\
    =&\dim P\cdot \tr(\Phi_\channel O^T\otimes \tilde{O}).
\end{aligned}    
 \end{equation}
 The last condition in \refeq{eq-lemma1proof} shows that $O^T\otimes \tilde{O}$ is a stabilizer.
 This completes the proof. 
\end{proof}

Next, we relate the amount of transferred operators to the mutual information, as summarized in the following lemma. 

\begin{lemma}
Given a stabilizer channel $\channel:\opspace(\hilbert_P)\rightarrow\opspace(\hilbert_Q)$ and two regions $A\subseteq P$, $S\subseteq Q$, we have $\log_2 H_{A\rightarrow S}(\channel) = I_{A:S}(\Phi_\channel) $.
\end{lemma}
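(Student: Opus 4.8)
The plan is to evaluate both sides in the stabilizer formalism and match them, with Lemma~\ref{lemma:Phistab} as the bridge. Write $G$ for the stabilizer group of the Choi state $\Phi_\channel$, a mixed stabilizer state living on $Q$ together with a copy $R$ of $P$; the region $A\subseteq P$ appearing in $I_{A:S}(\Phi_\channel)$ is understood as $A_R\subseteq R$, the image of $A$ under this copy. For any region $X$ put $G_X=\{g\in G:\supp(g)\subseteq X\}$. First I would invoke the standard stabilizer entropy formula $S_X(\Phi_\channel)=|X|-\log_2|G_X|$, which holds for mixed stabilizer states because tracing out $X^c$ produces the stabilizer state with group $G_X$. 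Applying it to $X=A_R$, $X=S$ and $X=A_R\cup S$ gives
\begin{equation}
    I_{A:S}(\Phi_\channel)=\log_2|G_{A_R\cup S}|-\log_2|G_{A_R}|-\log_2|G_S|.
\end{equation}

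Next I would control the three subgroups. Since $A_R\subseteq R$ and $S\subseteq Q$ live in disjoint subsystems, every element of $G_{A_R\cup S}$ factorizes as $O^T\otimes\tilde{O}$ with $\supp(O)\subseteq A$ and $\supp(\tilde{O})\subseteq S$, and by Lemma~\ref{lemma:Phistab} it belongs to $G$ precisely when $O$ is recovered by $\tilde{O}$ through $\channel$, i.e.\ $O=\channel^\dagger(\tilde{O})$. Taking $O=\identity$ and using that $\channel_C$ is trace preserving, hence $\channel^\dagger$ is unital with $\channel^\dagger(\identity)=\identity$, forces the only element of $G_{A_R}$ to be the identity, so $\log_2|G_{A_R}|=0$.

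The remaining task is the counting identity $\log_2|G_{A_R\cup S}|-\log_2|G_S|=\log_2 H_{A\rightarrow S}(\channel)$. Consider the map $\pi:G_{A_R\cup S}\to\pauli_A$ sending $O^T\otimes\tilde{O}\mapsto O$. By the previous paragraph its image is exactly $\{O\in\pauli_A:\exists\,\tilde{O}_S\in\pauli_S,\ O=\channel^\dagger(\tilde{O}_S)\}$, a set of size $H_{A\rightarrow S}(\channel)$. Its nonempty fibers are cosets of $G_S$: if $g_0=O_0^T\otimes\tilde{O}_0\in G$ and $g=O_0^T\otimes\tilde{O}'\in G$, then $g_0 g$ is an element of $G$ supported on $S$, i.e.\ lies in $G_S$, and $g=g_0(g_0 g)$ (using that stabilizer elements are involutions); conversely $g_0$ times any element of $G_S$ is a preimage of $O_0$. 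Hence every nonempty fiber has size $|G_S|$, so $|G_{A_R\cup S}|=H_{A\rightarrow S}(\channel)\cdot|G_S|$. Substituting this and $\log_2|G_{A_R}|=0$ into the displayed equation yields $I_{A:S}(\Phi_\channel)=\log_2 H_{A\rightarrow S}(\channel)$.

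I expect the only real care needed to be bookkeeping: verifying that elements of $G_{A_R\cup S}$ genuinely factorize through the disjoint regions, checking that the coset/fiber count is insensitive to the $\pm$ phases of Pauli operators (which follows from $g_0^2=\identity$ for stabilizer generators), and making sure the entropy formula is applied to the mixed Choi state with the correct qubit counts. None of this goes beyond the stabilizer manipulations already used in the paper.
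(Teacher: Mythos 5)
Your proof is correct and follows essentially the same route as the paper's: the stabilizer entropy formula $S_X=|X|-\log_2|G_X|$ applied to the Choi state, combined with Lemma~\ref{lemma:Phistab} and the observation that $\tilde{O}$ is only determined up to multiplication by elements of $G_S$, which gives $|G_{A_R\cup S}|=H_{A\rightarrow S}(\channel)\cdot|G_S|$. Your explicit fiber/coset argument and the derivation of $\log_2|G_{A_R}|=0$ from unitality of $\channel^\dagger$ are just more detailed versions of the paper's one-line remarks about ``redundancy'' and $S_A=|A|$.
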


\begin{proof}
Letting $S$ be the stabilizer group of $\Phi_\channel$. Define $G_S=G\cap \pauli_S$ and similarly $G_{AS}$. By lemma~\ref{lemma:Phistab}, we have:
\begin{equation}
    \log_2 H_{A\rightarrow B}(\channel)=\log_2|G_{AS}|-\log_2|G_{S}|.
\end{equation}
Here, $\log_2|G_{B}|$ is subtracted since the $\tilde{O}$ in lemma~\ref{lemma:Phistab} has the redundancy resulting from multiplying stabilizer operators in $G_S$. 

Let us recall the formula for entanglement entropy:
\begin{equation}
    S_{AS}=|A|+|S|-\log_2 |G_{AS}|,~~~S_{S}=|S|-\log_2 G_{S}.
\end{equation}
We obtain
\begin{equation}
    \log_2 H_{A\rightarrow S}(\channel)=|A|+|S|-S_{AS}-|S|+S_S= I_{A:S}(\Phi_\channel).
\end{equation}
The last equation is due to $S_A=|A|$ since $A$ is maximally mixed.  This completes the proof. 
\end{proof}

We are now ready to prove Thm.\ref{thm:channelinfo}. Recall that entanglement of the ensemble state $\Phi_{\channel}$ and that of each trajectory $\ket{\phi_{\textbf{m}}}$ are related by the following property of the conditional entropy:
\begin{equation}\label{eq:E5}
\begin{aligned}
    \E_{\textbf{m}}[S_X(\ket{\phi_{\textbf{m}}})]=S_{X|M}(\Phi_{\channel}) = S_{X\cup M}(\Phi_{\channel}) -S_M(\Phi_{\channel}).
\end{aligned}
\end{equation}
Here $X$ is any subregion of $P\cup R$, and $\E_{\textbf{m}}$ denotes averaging over all the trajectories:
\begin{equation}
\E_{\textbf{m}}[\cdots] \equiv \sum_{\textbf{m}}p_{\textbf{m}}(\cdots)_{\textbf{m}}.
\end{equation}
Moreover, for a Clifford monitored circuit, entanglement entropies of $\ket{\phi_{\textbf{m}}}$ do not depend on $\textbf{m}$ as long as $p_{\textbf{m}}\neq 0$.

We are now ready to derive \refeq{eq:monitored_relation}: 
\begin{equation}
\begin{aligned}
    I_{A:B}(\ket{\phi_{\textbf{m}}})
    =&(S_{A}
    + S_{B} - S_{A\cup B})(\ket{\phi_{\textbf{m}}})\\
    =& (S_{A_R|M}
    + S_{B_P|M}
    - S_{A_R\cup B_P|M})(\Phi_{\channel})\\
    =& (I_{A_R:(B_P\cup M)} - I_{A_R:M})(\Phi_{\channel})\\
    =& \log_2 H_{A\rightarrow B\cup M}(\channel) - \log_2 H_{A\rightarrow M}(\channel).
\end{aligned}
\end{equation}
Here the second equality follows from \refeq{eq:E5}, and the last equality follows from lemma \ref{lemma:Phistab}. This completes the proof of Thm.~\ref{thm:channelinfo}.

\section{Details about random polymer simulation}\label{ap:dpre-sim}
To perform numerical simulations of DPRE, we consider a lattice discretization of $D_V(\textbf{x}, \textbf{y})$ defined in \refeq{eq:randpolymer}. We replace the continuous spacetime manifold with a $L\times T$ grid, then change the continuous shortest path $\textbf{z}(\tau)$ into a discrete one $\textbf{z}_i$ on the grid: 
\begin{equation}
    D_V^{\textsf{discretized}}(\textbf{x}, \textbf{y}) = \min_{\substack{\textbf{z}: \textbf{z}_0=\textbf{x},\\ 
        \textbf{z}_{l(\textbf{z})}=
        \textbf{y}}}\left(\sum_{i=0}^{l(\textbf{z})-1} V_{(\textbf{z}_{i}, \textbf{z}_{i+1})}\right),
\end{equation}
where the integer valued $l(\textbf{z})$ is the total length of the path $\textbf{z}$. The discretized random potential $V_{(\textbf{z}, \textbf{z}')}$ is defined on the grid's links, and takes independent values at different locations. The problem, when formulated this way, is equivalent to looking for the ground state domain-wall of a random bond Ising model at zero temperature \cite{huse1985pinning}.

For simulations presented in \refsec{sec:domain-wall}, for each site $\textbf{z}$ we take $V_{\textbf{z}}$ to be a uniform distribution on the interval $(0,1)$. Since all $V$s are positive, the shortest path can be solved with the Dijkstra's algorithm using $O(LT\log(LT))$ time.

\end{document}